\newif\ifappendix
\newtheorem{thm}{Theorem}[section]
\newtheorem{lemma}[thm]{Lemma}
\newcommand{\cA}{\ensuremath{\mathcal{A}}}
\newcommand{\ZZP}{\mathbb{Z}_{+}}
\newcommand{\OPT}{\mbox{\sc OPT}}
\newcommand{\expct}[1]{\ensuremath{\text{{\bf E}$\left[#1\right]$}}}
\newcommand{\size}[1]{\ensuremath{\left|#1\right|}}
\mathchardef\mhyphen="2D
\newcommand{\cost}{\textsc{cost}}
\newcommand{\fcost}{\textsc{fcost}}
\newcommand{\mcost}{\textsc{mcost}}
\newcommand{\inn}{\fcost^{in}}
\newcommand{\out}{\fcost^{out}}
\newcommand{\CC}{\mathcal{C}}
\newcommand{\CP}{\mathcal{P}}
\newcommand{\CF}{\mathcal{F}}
\newcommand{\Intra}{\mathrm{intra}}
\newcommand{\Inter}{\mathrm{inter}}
\newcommand{\pivot}{{\sc Pivot}\xspace}
\newcommand{\match}{{\sc Match}\xspace}
\newcommand{\repmatch}{{\sc Rep. Match}\xspace}
\newcommand{\localsearch}{{\sc Local}\xspace}
\newcommand{\randbaseline}{{\sc Rand}\xspace}
\newcommand{\single}{{\sc Single}\xspace}
\newcommand{\error}{{\sc Error}\xspace}
\newcommand{\imbalancecolor}{{\sc Imbalance}\xspace}
\newcommand{\para}[1]{\smallskip \noindent{\bf #1.}}
\newenvironment{lemmafirst}[1]{\begin{lemma}\label{#1}
}{\end{lemma}}
\newenvironment{lemmaagain}[1]{\begin{trivlist}\label{#1_appendix}
\item[] \textbf{Lemma~\ref{#1}.} \em}{\end{trivlist}}
\newenvironment{thmfirst}[1]{\begin{thm}\label{#1}
}{\end{thm}}
\newenvironment{thmagain}[1]{\begin{trivlist}\label{#1_appendix}
\item[] \textbf{Theorem~\ref{#1}.} \em}{\end{trivlist}}
\title{Fair Correlation Clustering}
\author{
Sara Ahmadian \\ Google Research \\ \texttt{sahmadian@google.com} 
\And Alessandro Epasto \\ Google Research \\ \texttt{aepasto@google.com} 
\And  Ravi Kumar \\ Google Research \\ \texttt{ravi.k53@gmail.com}  
\And Mohammad Mahdian \\ Google Research \\ \texttt{mahdian@google.com} }
\begin{document}

\maketitle

\begin{abstract}\footnote{To appear in Proceedings of the 23rd International Conference on Artificial Intelligence and Statistics (AISTATS) 2020, Palermo, Italy. PMLR: Volume 108. Copyright 2020 by the author(s)} In this paper, we study correlation clustering under fairness constraints.  Fair variants of $k$-median and $k$-center clustering have been studied recently, and approximation algorithms using a notion called fairlet decomposition have been proposed.  We obtain approximation algorithms for fair correlation clustering under several important types of fairness constraints.

Our results hinge on obtaining a fairlet decomposition for correlation clustering by introducing a novel combinatorial optimization problem.  We define a fairlet decomposition with cost similar to the $k$-median cost and this allows us to obtain approximation algorithms for a wide range of fairness constraints. 

We complement our theoretical results with an in-depth analysis of our algorithms on real graphs where we show that fair solutions to correlation clustering can be obtained with limited increase in cost compared to the state-of-the-art (unfair) algorithms. 

\end{abstract}

\section{Introduction}
There is a growing literature on fairness in various learning and optimization problems~\cite{kamishima2011fairness,kamishima2012fairness,joseph2016fairness,celis2017ranking,yang2017measuring,celis2018multiwinner,matroidfair}. The goal of this literature is to develop criteria and algorithms to ensure that we can find solutions for  optimization/learning problems that are fair with respect to a certain sensitive feature. In the case of clustering, a fundamental unsupervised learning and optimization problem, the study of fairness was initiated by Chierichetti et al.~\cite{chierichetti2017fair}. They formulated the notion of proportional fairness, and developed approximation algorithms for fair $k$-median and fair $k$-center under this notion of fairness. Follow-up work generalized their results to other clustering problems, such as $k$-means and facility location, and to  more relaxed notions of fairness~\cite{bercea2018fair,kdd19,bera2019fair,scalablefair,kcenterfair}. Notably, the important graph problem of correlation clustering has been so far not addressed by this literature.

Correlation clustering uses information about both similarity and dissimilarity relationships among a set of objects in order to cluster them~\cite{Bansal2004}. In contrast to other clustering problems such as $k$-median, $k$-means, and $k$-center, the number of clusters is not pre-specified but rather determined based on the outcome of an optimization. This, as well as the fact that correlation clustering uses both similarity and dissimilarity information, makes it a desirable clustering model in many applications~\cite{kushagra2018semi,pouget2019variance,acn}. Therefore, it is natural to study this problem under fairness constraints.

The main tool introduced by Chierichetti et al.~\cite{chierichetti2017fair} for solving fair $k$-center and $k$-median problems is the notion of {\em fairlets}. A fairlet is a small set of elements that satisfies the fairness property. Chierichetti et al.~\cite{chierichetti2017fair} showed that fair $k$-median and $k$-center can be solved by first decomposing an instance into fairlets and then solving the clustering problem on the set of centers of these fairlets. To the best of our knowledge, this technique has been used only for {\it metric space} clustering problems such as $k$-center and $k$-median.

Our main result is developing a fairlet-based reduction for the {\it graph} clustering problem of correlation clustering. Whereas, in the case of $k$-center and $k$-median, the fairlet decomposition problem amounts to solving the same clustering problem on the same instance under the condition that each cluster is a fairlet, the situation for correlation clustering is complicated by the lack of the properties of metric spaces. To tackle this problem, we introduce a novel cost function for the correlation clustering fairlet decomposition, and prove that this cost can be approximated by a median-type clustering cost function for a carefully defined metric space. 

Given a solution to this fairlet decomposition problem, we show that the fair correlation clustering instance can be reduced to a regular correlation clustering instance through a graph transformation.  Therefore, any approximation algorithm for fairlet decomposition with median cost yields an approximation algorithm for fair correlation clustering; the loss in the approximation ratio depends on the size of the fairlets.  We show that in many natural cases, there is a fairlet decomposition with small fairlets, thereby bounding the approximation ratio of our algorithm for fair correlation clustering. 

In addition to the theoretical bounds on our algorithm, we provide an empirical evaluation based on real data sets, showing that the algorithms often perform much better in practice than their worst-case guarantees and that they yield solutions of costs comparable to that of unfair clustering algorithms while substantially reducing the cluster imbalance.

\para{Related work}
Clustering is a fundamental unsupervised machine learning task with a long history (cf.~\cite{jain2010data}). Our paper spans the areas of correlation clustering, clustering in metric spaces, and fairness in clustering which are actively growing fields. For brevity, we will only focus of key works in these three areas.

\noindent {\it Correlation clustering.}
Correlation clustering is a widely studied formulation of clustering with both similarity and dissimilarity information~\cite{Bansal2004}, with many applications in machine learning~\cite{kushagra2018semi,bressan2019correlation}. Variants of the problem include complete signed graphs~\cite{Bansal2004,acn} and weighted graphs~\cite{demaine2006correlation}. We focus on the complete graph case with $\pm 1$ weights which is APX-hard~\cite{charikar2005clustering} but admits constant-factor algorithms~\cite{acn,CMSY}. Distributed and streaming algorithms are also known~\cite{pan2015parallel,ahn2015correlation}.

\noindent {\it Metric space clustering.} 
The most widely studied clustering setting is clustering in metric spaces consisting in minimizing the $\ell_p$-norm of the distances between points in a cluster and their center. For $p\in\{1,2,\infty\}$ this corresponds to $k$-median, $k$-means, and $k$-center, respectively, which are NP-hard problems but admit constant-factor approximations~\cite{gonzalez1985clustering, hochbaum1985best,DBLP:journals/siamcomp/LiS16, DBLP:conf/focs/AhmadianNSW17,kanungo2004local}.

\noindent {\it Fairness in clustering.}
Fairness in machine learning is new area with a fast growing literature. Fundamental work in this area is devoted to defining notions of fairness~\cite{calders2010three, dwork2012fairness, feldman2015certifying, kamishima2012fairness} and solving fairness-constrained problems~\cite{celis2018multiwinner, celis2017ranking,chierichetti2017fair, joseph2016fairness,kamishima2012fairness, yang2017measuring,scalablefair,kamishima2011fairness,fish2016confidence,kdd19,matroidfair,har2019near}.

Chierichetti et al.~\cite{chierichetti2017fair} first introduced a notion of disparate impact for clustering and provided fair $k$-center algorithms for the case of two colors (or groups); see Section~\ref{sec:model}. Following this work, the problem has been later generalized in many directions including allowing many colors~\cite{rosner2018privacy}, allowing upper bounds on the fraction of points of a given color~\cite{kdd19} and both upper and lower bounds~\cite{bera2019fair,bercea2018fair}. Backurs et al.~\cite{
scalablefair} designed near-linear algorithms for finding $k$-median fairlets, and Huang et al.~\cite{huang2019coresets}  designed  core-sets for the problem. Other variants include clustering with diversity constraints~\cite{liyizhang}, proportionality constraints~\cite{chen2019proportionally}, and fair center selection~\cite{kcenterfair}. 
Fairness has been studied in spectral clustering as well~\cite{spectralfair,ziko2019clustering}. From an application point, fair clustering can be seen through the lenses of fair allocation~\cite{elzayn2019fair}, Medicaid eligibility~\cite{fang2019achieving}, and ensuring protected group representations~\cite{dash2019summarizing}.

To the best of our knowledge no prior work has addressed correlation clustering with fairness constraints in the cluster elements distribution. Kalhan~\cite{kalhan2019improved} recently studied a fairness notion in correlation clustering in which the maximum error for a vertex is bounded.

\section{Problem Statement}
\label{sec:model}

\para{Correlation clustering}
Let $G = (V, E)$ be a complete undirected graph on $|V| = n$ vertices and $\sigma: E\mapsto \mathbb{R}$ be a function that assigns a label to each edge. The label $\sigma(e)$ for each $e$ is either positive (indicating that the two endpoints of $e$ are \emph{similar}) or non-positive (indicating that they are \emph{dissimilar}). In the {\em unweighted} version of the problem~\cite{Bansal2004}, $\sigma(e)\in\{-1,+1\}$ for each $e$.
Our focus in this paper is on the unweighted version, although we will use the weighted version in the proofs. Let $E^+ = \{e\in E~\mid~\sigma(e)>0\}$ be the set of positive edges and $E^- = E \setminus E^+$ be the set of non-positive edges.  For subsets $S, T \subseteq V$, let $E(S) = E \cap S^2$ denote the edges inside $S$ and $E(S, T) = E \cap (S \times T)$ denote the edges between $S$ and $T$.  Let $E^+(S, T) = E^+ \cap E(S, T)$ and
$E^-(S, T) = E^- \cap E(S, T)$.

A \emph{clustering} is a partitioning $\CC = \{C_1, C_2, \ldots\}$ of $V$ into disjoint subsets. The sets of intra-cluster and inter-cluster edges in a clustering $\CC$ are defined as $\Intra(\CC) = \bigcup_{C\in\CC}E(C)$ and $\Inter(\CC) = E \setminus \Intra(\CC)$. 
The \emph{correlation clustering cost} of $\CC$ is defined as:
\[
\cost(G, \CC) = \sum_{e\in \Intra(\CC)\cap E^-} |\sigma(e)| + \sum_{e\in \Inter(\CC)\cap E^+} |\sigma(e)|.
\]
In the unweighted version of the problem, this is simply 
$\cost(G,\CC) = |\Intra(\CC)\cap E^-| + |\Inter(\CC)\cap E^+|$.
The goal of correlation clustering\footnote{This is the minimizing disagreements variant of correlation clustering. A maximizing agreements version can also be defined similarly. In this paper we focus on minimizing disagreements, since the maximization version admits a trivial randomized $2$-approximation that can be made fair.} is to find a clustering $\CC$ to minimize $\cost(G, \CC)$. For unweighted correlation clustering, there are constant-factor approximation algorithms for this problem~\cite{Bansal2004,acn,CMSY}, with the best known constant being 2.06.  For the weighted version, the best known algorithm obtains an $O(\log n)$-approximation~\cite{demaine2006correlation}. 

\para{Fairness constraints}
In the fair version of any clustering problem, each vertex $v \in V$ has a color $c(v)$.  Proportional fairness, defined by Chierichetti et al.~\cite{chierichetti2017fair}, requires that in every cluster, the number of vertices of each color is proportional to the corresponding number in the whole graph. In particular, in the symmetric case where each color appears the same number of times in the graph, we require the same in each cluster. Ahmadian et al.~\cite{kdd19} relaxed this property by requiring that each color constitutes at most an $\alpha$-fraction of each cluster, for a given $\alpha \in (0, 1)$. Bera et al.~\cite{bera2019fair} further generalized this notion to  include lower bounds on the number of vertices of each color in each cluster.

We give a general reduction from fair correlation clustering to a median fairlet decomposition that works for any of these definitions of fairness, and in fact for a more general class of constraints. As long as the fairlet decomposition problem can be solved with small fairlets (holds for the above fairness definitions; see Section~\ref{sec:decomposition}), this will give us an approximation algorithm for the corresponding fair correlation clustering problems.

\section{Overview of Results} \label{sec:overview}
In this section, we give a high-level overview of our algorithm and our main result.  A key ingredient of our algorithm is a general reduction from the given constrained correlation clustering problem (as defined below) to a {\em fairlet decomposition} problem. We then show how the cost of a fairlet decomposition can be approximated by a median clustering cost function. This allows us to use previous results on the fair median problem to solve fairlet decomposition for the standard notions of fairness defined in the previous section. Finally, given an approximately optimal fairlet decomposition, we use our reduction to reduce the constrained correlation clustering instance to a standard correlation clustering instance, and apply known algorithms~\cite{Bansal2004,acn,CMSY} to solve this problem.

\para{Constrained correlation clustering}
We start by defining a general class of \emph{constrained correlation clustering} problems. Consider an unweighted correlation clustering instance $G$ and let $\CF$ be a family of subsets of $V$. We treat $\CF$ as the family of feasible clusters, and assume it has the following {\em composability} property: for every $F_1, F_2\in\CF$, we have $F_1\cup F_2\in\CF$. Note this property is satisfied when $\CF$ is the collection of all fair sets under any of the definitions of fairness given in Section~\ref{sec:model}. The constrained correlation clustering problem is to define a correlation clustering $\CC$ with minimum $\cost(G,\CC)$ such that for all $C\in\CC$, we have $C\in\CF$.

\paragraph{Fairlet decomposition.} Next, we define the notion of {\em fairlet decomposition} used in our reduction. A fairlet decomposition for a constrained correlation clustering problem is simply a partition $\CP = \{P_1, P_2, \ldots\}$ of $V$ into subsets in $\CF$, i.e., $P_i\in\CF$ for all $i$. We call each $P_i$ a {\em fairlet}. The key in our reduction is a cost function $\fcost$ that evaluates $\CP$'s usefulness in building a correlation clustering of $G$. Here we define this cost function, and in Section~\ref{sec:median} we show how it can be approximated by the standard median clustering cost function in a carefully defined metric space.

\paragraph{Fairlet decomposition cost.}
Consider a fairlet decomposition $\CP = \{P_1, P_2, \ldots\}$. For each fairlet $P_i$, we let $\inn(P_i)$ be the number of negative edges inside $P_i$, i.e., 
$\inn(P_i)= \size{E^- \cap \Intra(P_i)}$. For fairlets $P_i, P_j$, we let
let $\out(P_i, P_j)$ be the number of edges between them with the minority sign, i.e., 
\[
\out(P_i, P_j) = \min(\size{E^-(P_i, P_j)}, \size{E^+(P_i, P_j)}).
\]
Finally, we let $\inn(\CP) = \sum_{i}\inn(P_i)$,  $\out(\CP)  = \sum_{i<j} \out(P_i, P_j)$, and $\fcost(\CP) = \inn(\CP) + \out(\CP)$.

\paragraph{Reduced instance.} Given a constrained correlation clustering instance $G$ and a fairlet decomposition $\CP$ for $G$, we define a \emph{reduced} correlation clustering instance as follows.  Let $G^\CP$ be a complete graph on $\{p_1, \ldots, p_{|\CP|} \}$, where  vertex $p_i$ corresponds to fairlet $P_i \in \CP$.  The label $\sigma(p_i, p_j)$ of the edge between $p_i$ and $p_j$ is the majority sign of the edges in $E(P_i, P_j)$ (with ties broken arbitrarily) multiplied by a weight that is equal to the number of edges in $E(P_i, P_j)$ with the majority sign. 
 
Note that the instance $G^\CP$ defined above is an instance of {\em weighted} correlation clustering, although as we will observe, the edges have weights that are within a constant factor each other, and therefore the problem can be solved using unweighted correlation clustering algorithms. Given a solution to this problem, it can be expanded into a solution of the original constrained problem. The final algorithm is sketched below.

\begin{algorithm}[h]
\begin{algorithmic}[1]
\State $\CP \gets $ approx. fairlet decomp. (Lemmas~\ref{lem:fair-decom-half},~\ref{lem:fair-decom-eqaul}).
\State $G^\CP$: $(p_i, p_j)$ gets majority sign in $E(P_i, P_j)$ and weight $\max(|E^+(P_i, P_j), |E^-(P_i, P_j)|)$.  
\State Let $\CC$ be an approximate (non-constrained) correlation clustering solution of $G^\CP$.
\State Output the clustering $\{\bigcup_{p_j\in C_i} P_j: C_i\in\CC\}$.
\end{algorithmic}
\caption{Constrained Correlation Clustering}
\label{alg:main}
\end{algorithm}

To prove that the Algorithm~\ref{alg:main} produces an approximately optimal solution to the constrained correlation clustering problem, we need the following lemmas. The first two lemmas prove that a solution of $G$ can be transformed to a solution of $G^\CP$ and vice versa, and these transformations do not increase the cost by more than the cost of the fairlet decomposition. The third lemma bounds the cost of a fairlet decomposition in terms of the cost of the optimal solution to the constrained correlation clustering problem. 
\ifappendix The proofs of these lemmas are presented in Appendix~\ref{sec:proofs1}.

\begin{lemmafirst}{lem:1}
Given a correlation clustering instance $G$, a fairlet decomposition $\CP$ for $G$, and a clustering $\CC$ of $G$, there exists a clustering $\CC'$ of $G^\CP$ such that
\[
    \cost(G^\CP,\CC') \leq \cost(G,\CC) + \out(\CP).
\]
\end{lemmafirst}

\begin{lemmafirst}{lem:2}
Let $\CC$ be a clustering of $G^{\CP}$ and $\CC'$ be the clustering computed in line 4 of Algorithm~\ref{alg:main}.  Then,
\[
\cost(G,\CC') \leq \cost(G^\CP,\CC) + \fcost(\CP).
\]
\end{lemmafirst}

\begin{lemmafirst}{lem:3}
For any constrained correlation clustering instance $G$, and any constrained clustering $\CC$ of $G$, there is a fairlet decomposition $\CP$ of $G$ satisfying $\fcost(\CP)\le \cost(G,\CC)$.
\end{lemmafirst}

Putting these together, we have the following:

\begin{thm}
\label{thm:fairletreduction}
Assume there is an $\eta$-approximation algorithm $A_1$ for finding the minimum cost fairlet decomposition $\CP$ and a $\beta$-approximation algorithm $A_2$ for solving the unconstrained correlation clustering instance $G^\CP$. Then Algorithm~\ref{alg:main} produces a $(\beta(1+\eta)+\eta)$-approximation for the constrained correlation clustering instance $G$.
\end{thm}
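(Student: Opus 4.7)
The plan is to chain the three lemmas together with the two approximation guarantees in a single sequence of inequalities, treating $\opt = \cost(G, \CC^*)$ as the benchmark where $\CC^*$ is the optimal constrained clustering of $G$.

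First, I would apply Lemma~\ref{lem:3} to $\CC^*$ to obtain a fairlet decomposition $\CP^*$ with $\fcost(\CP^*) \le \opt$. Since $A_1$ is an $\eta$-approximation, the fairlet decomposition $\CP$ returned in line~1 of Algorithm~\ref{alg:main} satisfies $\fcost(\CP) \le \eta \cdot \fcost(\CP^*) \le \eta \cdot \opt$.

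Next, I would transfer $\CC^*$ from $G$ into the reduced instance $G^\CP$ using Lemma~\ref{lem:1}: there exists a clustering $\CC''$ of $G^\CP$ with
\[
\cost(G^\CP, \CC'') \le \cost(G, \CC^*) + \out(\CP) \le \opt + \fcost(\CP) \le (1+\eta)\,\opt.
\]
In particular, the optimum of the unconstrained instance $G^\CP$ is at most $(1+\eta)\,\opt$. Applying the $\beta$-approximation $A_2$ in line~3, the clustering $\CC$ of $G^\CP$ satisfies $\cost(G^\CP, \CC) \le \beta(1+\eta)\,\opt$.

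Finally, I would lift $\CC$ back to a clustering of $G$ using Lemma~\ref{lem:2}, which bounds the cost of the expanded clustering $\CC'$ (line~4) by
\[
\cost(G, \CC') \le \cost(G^\CP, \CC) + \fcost(\CP) \le \beta(1+\eta)\,\opt + \eta\,\opt = \bigl(\beta(1+\eta) + \eta\bigr)\opt.
\]
This yields the claimed approximation ratio. There is no genuine obstacle here; the proof is essentially a bookkeeping exercise once the three lemmas are in hand. The only subtlety to be careful about is that Lemma~\ref{lem:1} must be applied to $\CC^*$ (not to the algorithm's output) in order to convert a constrained optimum into a benchmark for the unconstrained instance $G^\CP$, and Lemma~\ref{lem:3} must be applied to the same $\CC^*$ so that the same $\opt$ dominates both $\out(\CP)$ and $\fcost(\CP)$ in the chain.
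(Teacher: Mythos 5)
Your proposal is correct and follows essentially the same chain of inequalities as the paper's proof: Lemma~\ref{lem:3} bounds the optimal fairlet cost by $\cost(G,\OPT)$, Lemma~\ref{lem:1} transfers the constrained optimum to $G^\CP$ at a cost of $\out(\CP)\le\fcost(\CP)\le\eta\cdot\cost(G,\OPT)$, and Lemma~\ref{lem:2} lifts the output of $A_2$ back to $G$. The only piece you omit is the one-line observation (via the composability of $\CF$) that the output clustering is actually \emph{feasible} for the constrained problem, since each of its clusters is a union of fairlets; the theorem claims an approximation for the constrained instance, so this remark should be included.
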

\begin{proof}
Let $\OPT$ be an optimal solution to the constrained correlation clustering instance $G$. By Lemma~\ref{lem:3}, the fairlet decomposition problem has a solution of cost at most $\cost(G, \OPT)$, and therefore, algorithm $A_1$ for this problem must find a decomposition $\CP$ with $\fcost(\CP)\le \eta\cdot\cost(G, \OPT)$. Also, by Lemma~\ref{lem:1}, the instance $G^\CP$ has a solution of cost at most $(1+\eta) \cdot \cost(G, \OPT)$. Therefore, algorithm $A_2$ can find a clustering $\CC$ of cost at most $\beta(1+\eta) \cdot \cost(G, \OPT)$. Thus, by Lemma~\ref{lem:2}, the cost of the clustering produced by Algorithm~\ref{alg:main} is at most $(\beta(1+\eta)+\eta) \cdot \cost(G, \OPT)$. Finally, by the composability property of the constraints, we know that this clustering satisfies the constraints, since each of its clusters is a union of fairlets in $\CP$.
\end{proof}

In the following, we explain the approximation factor $\beta$ we can get for solving unconstrained correlation clustering instance $G^\CP$ and we dedicate the next section to approximation ratios $\eta$ that we can get for minimum cost fairlet decomposition problem depending on the fairness parameter $\alpha$ and the number of colors in a given fair correlation clustering instance. 

\begin{lemma}\label{lem:unconstraited-lem}
There exists an approximation algorithm for 
unconstrained correlation clustering of $G^\CP$  with approximation ratio of $\beta = \min(\log n, 2\rho r^2)$ where $r = \frac{\max_{P\in \CP} |P|}{\min_{P\in \CP} |P|}$ and $\rho$ is the approximation factor of unweighted correlation clustering\footnote{Currently best known approximation factor is $2.06$~\cite{CMSY}}.
\end{lemma}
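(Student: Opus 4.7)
The plan is to exploit the fact that although $G^\CP$ is technically a weighted correlation clustering instance, its edge weights lie in a narrow range determined by $r$, so we can either (i) run an unweighted algorithm while paying a weight-ratio factor, or (ii) fall back on the generic $O(\log n)$ algorithm of Demaine et al.~\cite{demaine2006correlation} for arbitrary weights, and take the better of the two.

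First I would bound the weights. Since $G$ is a complete graph, the number of edges between two fairlets $P_i, P_j$ is exactly $|P_i|\cdot|P_j|$, and the weight assigned to $(p_i,p_j)$ in $G^\CP$ is $\max(|E^+(P_i,P_j)|,|E^-(P_i,P_j)|)$. Therefore
\[
\tfrac{1}{2}|P_i||P_j|\;\le\;w(p_i,p_j)\;\le\;|P_i||P_j|,
\]
which gives $w_{\min}\ge \tfrac12(\min_{P\in\CP}|P|)^2$ and $w_{\max}\le (\max_{P\in\CP}|P|)^2$, hence $w_{\max}/w_{\min}\le 2r^2$.

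Next I would run an unweighted $\rho$-approximation algorithm~\cite{acn,CMSY} on the instance obtained from $G^\CP$ by forgetting weights (keeping only the signs). Let $\CC$ be the clustering it returns, and let $\OPT_w$ be an optimal clustering of the weighted instance $G^\CP$. For any clustering $\CC^\star$, writing $\cost_u$ for the unweighted cost and $\cost_w$ for the weighted cost, the bounds $w_{\min}\cdot\cost_u(\CC^\star)\le\cost_w(\CC^\star)\le w_{\max}\cdot\cost_u(\CC^\star)$ are immediate from the definition. Chaining these:
\[
\cost_w(G^\CP,\CC)\;\le\;w_{\max}\cdot\cost_u(G^\CP,\CC)\;\le\;\rho\, w_{\max}\cdot\cost_u(G^\CP,\OPT_w)\;\le\;\rho\,\tfrac{w_{\max}}{w_{\min}}\cdot\cost_w(G^\CP,\OPT_w),
\]
so this route yields approximation ratio at most $2\rho r^2$.

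For the other term in the minimum, I would simply invoke the $O(\log n)$-approximation algorithm of Demaine et al.~\cite{demaine2006correlation}, which works for arbitrary non-negative weights and so applies to $G^\CP$ directly. Returning the cheaper of the two clusterings gives the bound $\beta=\min(\log n,\,2\rho r^2)$ claimed in the statement. There is no real obstacle here: the only thing to verify carefully is the two-sided weight bound and the elementary chain of inequalities above; everything else is a direct invocation of known approximation algorithms.
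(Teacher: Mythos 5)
Your proposal is correct and follows essentially the same route as the paper: bound each weight $w(p_i,p_j)=\max(|E^+(P_i,P_j)|,|E^-(P_i,P_j)|)$ between $\tfrac12|P_i||P_j|$ and $|P_i||P_j|$, run the unweighted $\rho$-approximation on the sign-only instance paying the weight-ratio factor $2r^2$, and take the better of that and the $O(\log n)$ algorithm of Demaine et al. The paper states these steps tersely; your chain of inequalities just makes the transfer between weighted and unweighted costs explicit.
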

\begin{proof}
Since the reduced correlation clustering instance is a weighted correlation clustering instance, there exists an $O(\log n)$-approximation~\cite{demaine2006correlation}. Now since the weight of the edge between $p_i$ and $p_j$ in $G^\CP$ is at least $|P_i|\cdot|P_j|/2$ and at most $|P_i|\cdot|P_j|$, any two edges weights are within $2r^2$ of each other.  So if we remove the weights from $G^\CP$ and solve the resulting unweighted instance, we will get a $(2\rho r^2)$-approximation.
\end{proof}
\section{Fairlet Decomposition} 
\label{sec:decomposition}
In this section we show how to solve the fairlet decomposition problem by reducing it to a fair clustering problem with the median cost function in an appropriate metric space. The reduction (Section~\ref{sec:median}), loses a factor that is proportional to the size of the largest fairlet, but as we show in Section~\ref{sec:algorithms}, in cases that we know how to solve the fairlet decomposition problem, the size of the fairlets can be guaranteed to be small.
\subsection{Reduction to median cost}
\label{sec:median}
Consider a correlation clustering instance $G$ and let $d$ be a distance function defined on a metric space $M$ containing the set of vertices $V$. For a fairlet decomposition $\CP = \{P_1, P_2, \ldots\}$, we define the following median cost:
$\mcost(P_i) = \min_{u\in M} \sum_{v\in P_i} d(u,v)$ and $\mcost(\CP) = \sum_{P_i\in\CP}\mcost(P_i)$. Notice that the problem of finding the fairlet decomposition with minimum $\mcost(\CP)$ is precisely the fairlet decomposition problem for fair $k$-median, as studied by~\cite{chierichetti2017fair,bera2019fair}.

We now define a metric space $(M,d)$ such that the median cost $\mcost$ can approximate the fairlet cost $\fcost$. We first define an embedding $\phi: V \rightarrow [0,1]^n$ as follows.  For a vertex $u \in V$, let 
\[
\phi(u)_v = \left\{
\begin{array}{cl}
1 & \mbox{ if } u = v \mbox{ or } (u, v) \in E^+ \\
0 & \mbox{ if } (u, v) \in E^-.
\end{array}
\right.
\]
In other words, $\phi(v)$ is the $v$th row of the adjacency matrix of $G(V,E^+)$ after adding a positive self-loop at every vertex. Now we let $M=[0,1]^n$, and place the vertices in $V$ in this space using the mapping $\phi$. We let $d(\cdot, \cdot)$ be the Hamming distance between the points in $M$. In other words, for vertices $u,v\in V$, we have $d(u, v) = |\phi(u) - \phi(v)|$.  Intuitively, $d(u,v)$ measures the ``cost'' of committing to put $u$ and $v$ in one cluster in the correlation clustering instance.

We now prove the following two lemmas, which show that the $\fcost$ of a fairlet decomposition is close to its $\mcost$  with respect to the metric $d$. 
\ifappendix The proofs of these lemmas are in the Supplementary Material.
\else
The proofs of these lemmas are presented in the Supplemental Material.
\fi

\begin{lemmafirst}{lem:medbound1}
For any fairlet decomposition $\CP$, we have 
\[
\mcost(\CP) \leq 2 \cdot \fcost(\CP).
\]
\end{lemmafirst}

\begin{lemmafirst}{lem:medbound2}
Let $\CP$ be any fairlet decomposition and let $f = \max_{P \in \CP} |P|$.  Then, 
\[
\fcost(\CP) \leq 2f \cdot \mcost(\CP).
\]
\end{lemmafirst}
Using the above lemmas, we have the following.

\begin{thm}
\label{thm:medianreduction}
Assume there is a $\gamma$-approximation algorithm for fairlet decomposition with median costs. Furthermore, assume that this algorithm always produces fairlets of size at most $f$. Then the solution produced by this algorithm is a $(4f\gamma)$-approximation to the problem of finding a fairlet decomposition with minimum $\fcost$.
\end{thm}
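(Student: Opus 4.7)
The plan is to chain the two bounds in Lemmas~\ref{lem:medbound1} and~\ref{lem:medbound2} around the given $\gamma$-approximation, comparing its output to an optimum measured in the appropriate cost. First, let $\CP^*$ denote an optimal fairlet decomposition for the $\fcost$ objective and let $\CP^*_m$ denote an optimal fairlet decomposition for the $\mcost$ objective (both decompositions use fairlets in $\CF$). Since $\CP^*$ is itself a feasible fairlet decomposition, the optimality of $\CP^*_m$ gives $\mcost(\CP^*_m) \le \mcost(\CP^*)$, and then Lemma~\ref{lem:medbound1} yields $\mcost(\CP^*_m) \le 2\,\fcost(\CP^*)$.

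Next I would invoke the $\gamma$-approximation hypothesis. Letting $\CP$ be the fairlet decomposition the algorithm outputs, by assumption $\mcost(\CP) \le \gamma\cdot \mcost(\CP^*_m)$, and by the previous inequality $\mcost(\CP) \le 2\gamma\cdot\fcost(\CP^*)$. Now I would apply Lemma~\ref{lem:medbound2} to $\CP$ itself: since the algorithm is guaranteed to output fairlets of size at most $f$, we have $\fcost(\CP) \le 2f\cdot\mcost(\CP)$. Combining the two bounds gives $\fcost(\CP) \le 4f\gamma\cdot\fcost(\CP^*)$, which is precisely the claim.

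There is essentially no obstacle here: the proof is a two-line sandwich that uses each of Lemmas~\ref{lem:medbound1} and~\ref{lem:medbound2} exactly once, plus the optimality comparison between the two objectives on $\CP^*$. The only subtle point worth stating explicitly in the write-up is that $\CP^*$ is a valid competitor for the median-cost minimization problem (the feasibility class $\CF$ is the same under either objective, so restricting to fairlet decompositions does not change the feasible set), which justifies $\mcost(\CP^*_m) \le \mcost(\CP^*)$. No additional machinery beyond the two lemmas is needed.
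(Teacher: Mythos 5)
Your proof is correct and is exactly the intended argument: the paper gives no explicit proof of Theorem~\ref{thm:medianreduction}, stating only that it follows from Lemmas~\ref{lem:medbound1} and~\ref{lem:medbound2}, and your two-step sandwich (comparing the median optimum to the $\fcost$ optimum via Lemma~\ref{lem:medbound1}, then converting the algorithm's output back via Lemma~\ref{lem:medbound2} using the size bound $f$) is precisely that chain. Your explicit note that $\CP^*$ is a feasible competitor for the median objective is a worthwhile detail the paper leaves implicit.
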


\subsection{Algorithms for fairlet decomposition}
\label{sec:algorithms}
In this section, we give algorithms for fairlet decomposition with median cost for several notions of fairness. Using Theorem~\ref{thm:medianreduction}, these algorithms imply algorithms for fairlet decomposition problem and provide the algorithm $A_1$ in Theorems~\ref{thm:fairletreduction}. We focus on three fairness constraints: an upper bound of $\alpha=\frac12$ on the fraction of vertices of each color in each cluster; an upper bound of $\alpha = 1/C$ where $C$ is the number of distinct colors (this corresponds to the proportional fairness property studied in~\cite{chierichetti2017fair}); and an upper bound of $\alpha=1/t$ for an integer $t$ on the fraction of vertices of each color in each cluster. We give approximation algorithms for the first two cases and a bicriteria approximation for the third case, with upper bounds of $3$, $C$, and $2t-1$, respectively, on the size of fairlets. 

Throughout this subsection, when we speak of the cost of a fairlet decomposition, we mean its median cost.

\subsubsection{$\alpha = 1/2$}
\label{sec:fair-one-half-decomp}

This is probably the most common case of fair decomposition where clusters are required to not have a dominant color. In this case, we can show that a fairlets have size at most $3$ and find these fairlets by solving a minimum weight $2$-factor problem in a graph. Recall that a $2$-factor is a subgraph where each vertex has degree $2$ and edges may be used multiple times. This problem can be solved polynomially~\cite[Chapter 21]{schrijver2003combinatorial}. Define a graph $H$ on points in $V$ as follows: two vertices $u,v$ are connected by an edge if they have distinct colors; the weight of the edge is $d(u,v)$.  We first bound the cost of the optimal $2$-factor and then explain our approximation factor in the lemma. 
\begin{lemma}\label{lem:2factor}
The cost of an optimal $2$-factor in $H$ can be bounded by $2\cdot \mcost(\CP^*)$, where $\CP^*$ is the optimal fairlet decomposition.
\end{lemma}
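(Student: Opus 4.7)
The plan is to construct an explicit feasible $2$-factor of $H$ directly from the optimal fairlet decomposition $\CP^*$ and bound its weight by $2\cdot\mcost(\CP^*)$; since the optimal $2$-factor has weight at most that of any feasible $2$-factor, the lemma follows.

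For each fairlet $P\in\CP^*$, let $\mu_P\in M$ be a point achieving $\mcost(P)=\sum_{v\in P}d(v,\mu_P)$. I will build a $2$-regular multisubgraph on the vertex set $P$ using only edges of $H$ (i.e.\ edges between vertices of different colors). Taking the disjoint union of these pieces over all $P\in\CP^*$ yields a $2$-factor of $H$, because the fairlets partition $V$ and each vertex receives degree exactly $2$ from its own piece.

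The construction splits on the size of $P$. If $|P|=2$, the fairness constraint $\alpha=\tfrac12$ forces the two vertices $u,v$ to have different colors, so $(u,v)$ is an edge of $H$; taking this edge with multiplicity $2$ gives a valid local $2$-factor of weight $2d(u,v)\le 2\bigl(d(u,\mu_P)+d(v,\mu_P)\bigr)=2\,\mcost(P)$ by the triangle inequality. If $|P|\ge 3$, the constraint ensures that no color class in $P$ has size exceeding $\lfloor |P|/2\rfloor$; this is exactly the classical condition guaranteeing the existence of a Hamiltonian cyclic arrangement $v_1,v_2,\ldots,v_{|P|},v_1$ of $P$ in which consecutive vertices have distinct colors (a standard pigeonhole/greedy argument, e.g.\ place the most frequent color on alternating positions first and fill in the rest). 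All $|P|$ edges of this cycle lie in $H$, and summing the triangle inequality $d(v_i,v_{i+1})\le d(v_i,\mu_P)+d(v_{i+1},\mu_P)$ around the cycle (each vertex appearing in exactly two edges) gives total weight at most $2\sum_{v\in P}d(v,\mu_P)=2\,\mcost(P)$.

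Summing these local bounds over all fairlets yields a feasible $2$-factor of $H$ of total weight at most $2\sum_{P\in\CP^*}\mcost(P)=2\,\mcost(\CP^*)$, proving the claim. The main obstacle is the case $|P|\ge 3$: one needs the combinatorial fact that the color-balance condition $\max_c|\{v\in P:c(v)=c\}|\le\lfloor|P|/2\rfloor$ is sufficient to produce a Hamilton cycle with no two monochromatic neighbors. I would either cite this as a folklore result or include a brief inductive proof (greedily peel off a pair of distinct-colored vertices while preserving the balance condition on what remains).
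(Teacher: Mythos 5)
Your proof is correct and follows essentially the same route as the paper's: construct a feasible $2$-factor fairlet-by-fairlet out of bichromatic edges of $H$ and bound its weight by routing each edge through the fairlet's median $\mu_P$ via the triangle inequality, so that each vertex is charged exactly $2d(v,\mu_P)$ and each fairlet contributes at most $2\cdot\mcost(P)$. The only (cosmetic) difference is the local gadget: the paper uses a doubled rainbow perfect matching plus, for odd-sized fairlets, one multi-color triangle, whereas you use a properly colored Hamiltonian cycle --- both exist under the $\alpha=1/2$ color-balance condition and yield the identical bound.
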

\begin{proof}
We construct a feasible $2$-factor by constructing a $2$-factor for each fairlet $P\in \CP^*$ with center $\mu$. Since there are at most $|P|/2$ vertices of any color, depending on the parity of $P$, vertices of $P$ can be covered by matching and a possible multi-color triangle in $H$. Doubling the matching edges, we can get a $2$-factor for covering $\CP^*$. It remains to bound the cost of this $2$-factor. For a matching edge $(u,v)$ for $u,v\in P$, by triangle inequality, $d(u,v) \leq d(u,\mu_i) + d (v,\mu_i)$, and for a triangle $(u,v,w)$, the sum of pairwise distances can be bounded by $2(d(u,\mu) + d(v,\mu) + d(w,\mu))$. Hence the cost of the proposed $2$-factor for covering $\CP^*$ is at most $2 \cdot \mcost(\CP^*)$ and the overall cost of the optimal 2-factor is at most $2\cdot \mcost(\CP^*)$. 
\end{proof}

\begin{lemma}\label{lem:fair-decom-half}
For $\alpha = 1/2$, there is an approximation algorithm for fairlet decomposition that returns a solution with 
 \setlist{nolistsep}
    \begin{itemize}[noitemsep]
    \item median cost at most $2 \cdot \mcost(\CP^*)$.
    \item the size of largest fairlet is at most $3$.
    \item the size of smallest fairlet is at least $2$.
\end{itemize}
\end{lemma}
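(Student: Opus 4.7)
The plan is to build the fairlet decomposition $\CP$ from a minimum-cost $2$-factor $F$ of the graph $H$ and then invoke Lemma~\ref{lem:2factor}. A minimum $2$-factor exists and is computable in polynomial time via standard $b$-matching; feasibility is guaranteed by Lemma~\ref{lem:2factor}, which already exhibits one of weight at most $2\,\mcost(\CP^*)$. I would then decompose $F$ into its vertex-disjoint cycles, allowing ``$2$-cycles'' arising from doubled edges.

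I would read off fairlets cycle by cycle. A $2$-cycle $\{u,v\}$ yields the size-$2$ fairlet $\{u,v\}$, which is fair because $(u,v)\in H$ forces $c(u)\ne c(v)$. A $3$-cycle $\{u,v,w\}$ yields the size-$3$ fairlet $\{u,v,w\}$; since all three pairwise edges lie in $H$, the three colors are pairwise distinct. For an even cycle of length $\ell\ge 4$, I partition the vertices into $\ell/2$ consecutive pairs, each spanning an $H$-edge. For an odd cycle of length $\ell\ge 5$, I extract a triangle of three consecutive vertices $v_i,v_{i+1},v_{i+2}$ whose colors are pairwise distinct and split the remaining $\ell-3$ vertices (an even-length path) into consecutive pairs. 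Such a triple must exist: if $c(v_i)=c(v_{i+2})$ for every $i$, then $c(v_1)=c(v_3)=\cdots$, and because $\ell$ is odd this wraps around to force $c(v_1)=c(v_2)$, contradicting $v_1v_2\in H$.

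For the cost bound, I argue cycle-by-cycle that the total median cost of the fairlets carved from a cycle $C$ is at most the weight $W_C$ of $C$ in $F$. This is immediate for $2$-cycles (where $\mcost=d(u,v)\le 2\,d(u,v)=W_C$) and for $3$-cycles (placing the median at a vertex bounds the cost by two incident edge weights). For even cycles, the cheaper of the two alternating matchings costs at most $W_C/2$. For odd cycles of length $\ell\ge 5$, placing the triangle's median at $v_{i+1}$ bounds its cost by $w_i+w_{i+1}$, and pairing the residual path consecutively costs at most the weight $W_C-(w_{i-1}+w_i+w_{i+1}+w_{i+2})$ of the path-internal cycle edges; adding gives $W_C-w_{i-1}-w_{i+2}\le W_C$. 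Summing over all cycles and invoking Lemma~\ref{lem:2factor} yields $\mcost(\CP)\le\text{weight}(F)\le 2\,\mcost(\CP^*)$, while the size bounds hold by construction since every produced fairlet has size exactly $2$ or $3$. The main obstacle is precisely this accounting for long odd cycles: the triangle must use a chord $v_iv_{i+2}$ that is \emph{not} a cycle edge of $F$, so its cost has to be paid out of the cycle's own budget, and the computation goes through because the two ``connector'' edges $e_{i-1}$ and $e_{i+2}$ of the cycle, which lie inside no produced fairlet, supply exactly the needed slack.
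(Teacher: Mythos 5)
Your proposal is correct and follows essentially the same route as the paper: build the fairlets from an optimal $2$-factor of $H$, using alternating edges on even cycles and one rainbow triangle plus alternating edges on odd cycles, then charge the median cost to the $2$-factor's weight and invoke Lemma~\ref{lem:2factor}. You additionally spell out two details the paper leaves implicit --- the parity argument for why an odd cycle must contain three consecutive pairwise-distinctly-colored vertices, and the explicit accounting showing the triangle's chordal cost is absorbed by the two unused connector edges --- both of which check out.
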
 
\begin{proof}
Consider an optimal 2-factor in $H$. Define a fairlet decomposition as follows. For each cycle of even length, consider a set of alternating edges and let each alternating edge be a fairlet with one of the endpoints chosen as the center. For a cycle of odd length, there must exists three consecutive vertices of pairwise distinct colors. In this case, let one fairlet be these three vertices with the middle vertex chosen as the center and for the (unique) alternating edges covering the remaining vertices, let each edge be a fairlet with one of the endpoints chosen as the center. The median cost of these fairlets is at most the weight of the original 2-factor, which is at most $2 \cdot \mcost(\CP^*)$ by Lemma~\ref{lem:2factor}; the proof follows by construction. 
\end{proof}
Lemma~\ref{lem:fair-decom-half} and Lemma~\ref{lem:unconstraited-lem} yield the following.
\begin{thmfirst}{thm:dummy1}
For $\alpha = 1/2$, there is a $256$-approximation algorithm for fair correlation clustering.
\end{thmfirst}

\subsubsection{$\alpha = 1/C$ for $C$ colors}
\label{sec:fair-equal-decomposition}
In the case of $\alpha = 1/C$, each fairlet has equal number of points of each color and so these points can be matched together. We use this observation and devise an algorithm based on solving repeated matching problems in graph $H$ (construction explained in Section~\ref{sec:overview}).  
\begin{lemma}\label{lem:fair-decom-eqaul}
For $\alpha = 1/C$, there is an approximation algorithm for fairlet decomposition that returns a solution with 
 \setlist{nolistsep}
    \begin{itemize}[noitemsep]
    \item median cost at most $C \cdot \mcost(\CP^*)$,
    \item the size of each fairlet is $C$.
\end{itemize}
\end{lemma}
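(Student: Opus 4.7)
The plan is to construct the fairlets via a collection of minimum-weight bipartite matchings between color classes of $H$, then compare the resulting median cost to $\mcost(\CP^*)$ using the triangle inequality in the Hamming metric $d$. Since $\alpha = 1/C$ forces every feasible cluster to contain exactly one vertex of each color, the (standard symmetric) input has equal numbers of vertices in each color class, so every bipartite matching between two color classes that we invoke is perfect.

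The algorithm fixes an arbitrary color, say color $1$, as a reference. For each $c \in \{2, \ldots, C\}$ it computes a minimum-weight perfect matching $M_c$ in the bipartite subgraph of $H$ induced on color-$1$ and color-$c$ vertices. For each color-$1$ vertex $u$ the output fairlet is
\[
P(u) = \{u\} \cup \bigl\{\, M_c\text{-partner of } u : c = 2, \ldots, C \,\bigr\}.
\]
By construction each $P(u)$ contains exactly one vertex of every color, so $|P(u)| = C$ and $\{P(u)\}_u$ partitions $V$, establishing the size bullet.

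For the cost bound I would proceed in two steps. First, bound $\mathrm{wt}(M_c)$ for each $c$ via the feasible matching induced by $\CP^* = \{P_1^*, P_2^*, \ldots\}$ with medians $\mu_i$: pair the color-$1$ vertex $v_i^1 \in P_i^*$ with its color-$c$ counterpart $v_i^c \in P_i^*$. Triangle inequality gives $d(v_i^1, v_i^c) \le d(v_i^1, \mu_i) + d(v_i^c, \mu_i)$, so
\[
\mathrm{wt}(M_c) \;\le\; \sum_i d(v_i^1, \mu_i) + \sum_i d(v_i^c, \mu_i).
\]
Summing over $c = 2, \ldots, C$ and noting that $\sum_i d(v_i^c, \mu_i) \le \mcost(\CP^*)$ for every $c$ (since each such sum is a subset of the terms in $\mcost(\CP^*) = \sum_i \sum_{c'} d(v_i^{c'}, \mu_i)$) yields $\sum_{c=2}^C \mathrm{wt}(M_c) \le C \cdot \mcost(\CP^*)$. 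Second, bound $\mcost(\CP)$ by taking $u$ itself as a candidate median of $P(u)$, which gives $\mcost(P(u)) \le \sum_{c=2}^C d(u, M_c\text{-partner of } u)$; summing over all color-$1$ vertices $u$ reproduces exactly $\sum_{c=2}^C \mathrm{wt}(M_c)$, so chaining the two inequalities yields $\mcost(\CP) \le C \cdot \mcost(\CP^*)$.

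The main obstacle is only bookkeeping: carefully chaining the two inequalities (bounding each $M_c$ via the optimal fairlets, then bounding the output's median cost via the $M_c$'s) without double-counting any distance term. Everything structural is provided by the triangle inequality in the Hamming metric $d$ together with the polynomial-time solvability of minimum-weight bipartite matching.
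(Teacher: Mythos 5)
Your construction is correct and yields the claimed bounds, but it is genuinely different from the paper's. The paper chains minimum-cost matchings between \emph{consecutive} colors $c$ and $c+1$, so the union of the $C-1$ matchings is a collection of paths of $C$ vertices (one per color); each path is a fairlet whose center is its middle vertex. There the total matching weight is only $2\cdot\mcost(\CP^*)$ (each color class is charged at most twice), but routing every vertex along the path to the middle vertex costs an extra factor of $C/2$, giving $C\cdot\mcost(\CP^*)$. You instead build a \emph{star}: all $C-1$ matchings share color $1$ as a hub, each hub vertex together with its partners is a fairlet, and the hub itself serves as the median. This makes the second step free (the fairlet's median cost is exactly its star weight), but the first step is more expensive because the color-$1$ distances to the optimal centers are charged $C-1$ times, so the total matching weight is bounded by $(C-1)\sum_i d(v_i^1,\mu_i) + \sum_{c\ge 2}\sum_i d(v_i^c,\mu_i) \le C\cdot\mcost(\CP^*)$. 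The two factors of $C$ arise in different places but the end result is identical; your version has the small advantage of a cleaner second step and of designating an explicit vertex of $V$ as each fairlet's center.

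Two points to tighten. First, the bookkeeping in your first step must use the fact that the sums $\sum_i d(v_i^c,\mu_i)$ over $c=2,\dots,C$ are \emph{disjoint} pieces of $\mcost(\CP^*)$ and hence total at most $\mcost(\CP^*)$; bounding each of the $2(C-1)$ sums separately by $\mcost(\CP^*)$, as your phrasing suggests, only gives $2(C-1)\cdot\mcost(\CP^*)$, which exceeds $C\cdot\mcost(\CP^*)$ for $C\ge 3$. Second, $\alpha=1/C$ forces each optimal fairlet to contain \emph{equal numbers} of each color, not necessarily exactly one; your comparison matching should pair the color-$1$ vertices of each $P_i^*$ with its color-$c$ vertices arbitrarily (possible since the counts agree), or you should first argue, as the paper implicitly does, that one may assume minimal optimal fairlets since splitting a fairlet never increases its median cost. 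Both fixes are routine.
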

\begin{proof}
Consider an arbitrary ordering of the colors and solve a mincost matching problem between points of color $c$ and $c+1$ in the graph $H$. The union of these matchings yields a partition of $V$ into paths of length $C$. Each such path is a fairlet; let $\CP$ denote this fairlet decomposition. It remains to bound the cost of $\CP$.

Let us fix two colors $c$ and $c+1$ and let $M$ be an arbitrary matching between vertices of color $c$ and $c+1$ such that point $u$ is matched to a point $v$ only if $u$ and $v$ belong to the same partition of $\CP^*$. Since each part in $\CP^*$ has equal number of vertices of each color and there is an edge between any two vertices of different colors in $H$, matching $M$ exists. Now since the cost of each matching edge $(u,v)$ can be bounded by $d(u,\mu) + d(v,\mu)$ where $\mu$ is the center of the partition containing $u$ and $v$, the cost of $M$ can be bounded by median cost of serving clients of colors $c$ and $c+1$. Since each color is matched twice, the total cost of each path corresponding to a partition is at most $2 \cdot \mcost(\CP^*)$. Now for each path we pick the middle vertex as center and the cost of assigning vertices of the path to the center is at most $C/2$ cost of the path as each edge is charged at most $C/2$ times. Hence  $\mcost(\CP) \leq C \cdot \mcost(\CP^*)$.
\end{proof}

So in this case, we get a 2-approximation with fairlets of size at most $C$. Note that in the motivating application of fair clustering, the color of each vertex corresponds to one possible value of a sensitive feature like race or gender, and therefore the value $C$ tends to be small in such applications.

Lemma~\ref{lem:fair-decom-eqaul} and Lemma~\ref{lem:unconstraited-lem} yield the following.
\begin{thmfirst}{thm:dummy2}
For $\alpha =1/C$, there is a $(16.48 C^2)$-approximation algorithm for fair correlation clustering.
\end{thmfirst}

\subsubsection{$\alpha = 1/t$}\label{sec:fair-1-over-t}

The fair decomposition with median cost is not well studied in the literature and in this paper, we were able to devise algorithms for the case of $\alpha = 1/2$ and $\alpha = 1/C$.  Next we consider the case where $1/\alpha \in \ZZP$ and we argue how to utilize any approximation algorithm for fairlet decomposition as a black-box to build an algorithm for fair correlation clustering.  While we allow the black-box algorithm to produce fairlets of arbitrary size, the  following lemma ensures that the size of the fairlets can be bounded. 

\begin{lemma}\label{lem:fairletpartition}
For any set $P$ that satisfies fairness constraint with $\alpha = 1/t$, there exists a partition of $P$ into sets $(P_1, P_2, \ldots)$ where each $P_i$ satisfies the fairness constraint and $t \leq |P_i| < 2t$. 
\end{lemma}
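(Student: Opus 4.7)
\textbf{Proof plan for Lemma~\ref{lem:fairletpartition}.} The plan is to build the partition by a round-robin assignment, after sorting the vertices of $P$ by color, into exactly $m = \lfloor |P|/t \rfloor$ groups. The first observation is that the fairness constraint $\alpha = 1/t$ collapses, on any candidate block $P_i$ of size in $[t, 2t)$, to the simple requirement that each color appears at most once inside $P_i$: since $|P_i|/t$ lies strictly between $1$ and $2$, the allowed integer count per color is at most $1$. Thus it suffices to produce a partition into groups of size in $[t, 2t-1]$ in which no color repeats within a group.

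Write $n = |P|$ and $n = qt + r$ with $0 \le r < t$; we may assume $n \ge t$ (otherwise fairness forces $P = \emptyset$), so $q \ge 1$. Set $m = q = \lfloor n/t \rfloor$. List the vertices of $P$ as $v_1, v_2, \ldots, v_n$ sorted by color (all vertices of the first color, then all vertices of the second color, and so on), and define $P_j = \{v_i : i \equiv j \pmod{m}\}$ for $j = 1, \ldots, m$.

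Two routine verifications establish the lemma. For sizes: round-robin gives $|P_j| \in \{\lfloor n/m \rfloor, \lceil n/m \rceil\}$, and since $n = mt + r$ with $0 \le r < t$ we have $\lfloor n/m \rfloor = t$ and $\lceil n/m \rceil \le t + \lceil r/m \rceil \le t + r \le 2t - 1$. For color distinctness within each $P_j$: the fairness of $P$ yields $n_c \le n/t$ for every color $c$, and since $n_c$ is an integer, $n_c \le \lfloor n/t \rfloor = m$. The $n_c$ vertices of color $c$ occupy $n_c$ consecutive positions in the sorted sequence, and any $n_c \le m$ consecutive integers have distinct residues modulo $m$, so the vertices of color $c$ land in $n_c$ distinct groups. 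Combined with the earlier reduction, this shows that each $P_j$ satisfies the fairness constraint.

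The only delicate point is the choice of $m$: it must be simultaneously small enough that each group receives at least $t$ elements (forcing $m \le \lfloor n/t \rfloor$) and large enough that each color's count fits into distinct slots (forcing $m \ge \max_c n_c$). The fairness hypothesis is exactly what reconciles these two demands, via $\max_c n_c \le \lfloor n/t \rfloor$, so $m = \lfloor n/t \rfloor$ works, and the upper bound $|P_j| < 2t$ falls out from $r < t$. I do not anticipate any further obstacles; the argument is essentially a one-line construction plus two elementary checks.
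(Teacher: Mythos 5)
Your proposal is correct and is essentially the paper's own argument: write $|P| = mt + r$ with $0 \le r < t$, note that fairness gives at most $m = \lfloor |P|/t \rfloor$ vertices of each color, sort by color, and distribute round-robin into $m$ groups so that each group has between $t$ and $t + r < 2t$ elements and at most one vertex of each color (your explicit reduction of fairness on blocks of size in $[t, 2t)$ to ``each color at most once'' is a detail the paper leaves implicit). The only quibble is your claim $\lfloor n/m \rfloor = t$, which fails when $r \ge m$ (e.g.\ $m = 1$, $n = 2t-1$); but only the inequality $\lfloor n/m \rfloor \ge t$ is needed and that always holds, so the argument is unaffected.
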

\begin{proof}
Let $p = m \times t + r$  with $ 0 \leq r  < t$.  Then, the fairness constraints ensures that there are at most $m$ elements of each color. Consider the partitioning obtained through the following process: consider an ordering of elements where points of the same color are in consecutive places, assign points to sets $P_1, \ldots, P_m$ in a round-robin fashion. So each set $P_i$ gets at least $t$ elements and at most $t + r < 2t $ elements assigned to it. Since there are at most $m$ elements of each color, each set gets at most one point of any color and hence all sets satisfy the fairness constraint as $1 \leq \frac{1}{t} \cdot |P_i|$. 
\end{proof}

\begin{thmfirst}{thm:dummy3}
For $\alpha = 1/t$, given an $\gamma$-approximation algorithm for fairlet decomposition with median cost, there is an $O(t\gamma)$-approximation algorithm for fair correlation clustering.  
\end{thmfirst}

\newcommand{\amazon}{\textsf{amazon}\xspace}
\newcommand{\victorian}{\textsf{victorian}\xspace}
\newcommand{\reuters}{\textsf{reuters}\xspace}

\section{Experiments}

\begin{table*}
\small
\centering
\begin{tabular}{l|rr|rr|rrr}
\toprule
        Dataset &  \multicolumn{2}{c|}{\makecell{Unfair Alg.\\ \error}}  & \multicolumn{2}{c|}{\makecell{Unfair Alg.\\ \imbalancecolor}}  & \multicolumn{3}{c}{\makecell{Fair Alg. \error}}    \\
         &   \localsearch & \pivot   &   \localsearch & \pivot & \makecell{\match + \\ \localsearch} & \single & \randbaseline \\
\midrule
\amazon&0.010&0.011&0.40&0.39&0.064&0.786&0.215\\
\reuters, $\theta=0.25$&0.096& 0.161 &0.64&0.59&0.230&0.754&0.255\\
\reuters, $\theta=0.50$&0.181&0.231 &0.50&0.40&0.350&0.504&0.502\\
\reuters, $\theta=0.75$&0.188&0.241&0.15&0.25&0.199&0.252&0.746\\
\victorian, $\theta=0.25$&0.109&0.158&0.53& 0.46&0.212&0.753&0.251\\
\victorian, $\theta=0.50$&0.183&  0.268&0.31&0.23&0.348&0.502&0.499\\
\victorian, $\theta=0.75$&0.203&0.280&0.12&0.12&0.237&0.251&0.747\\
\midrule
mean over datasets&0.139&0.193&0.38&0.35&0.234&0.459&0.543\\
\bottomrule
\end{tabular}
\caption{\error and \imbalancecolor in $C=2$ color case for various datasets and different threshold $\theta$ for the quantile used for positive edges. Notice how our algorithm \match + \localsearch has cost comparable to \pivot and not much higher than \localsearch while reducing the imbalance from the up $65\%$ of the unfair algorithms to 0.} 
	\label{table:results-2-colors}
\end{table*}

In this section we present our experiments demonstrating that our algorithm solves the correlation clustering problem with fairness constraints with only a limited loss in the cost when compared to the vanilla (unfair) solution. We describe the datasets used, the algorithms evaluated, the quality measures, and our results. 

\para{Datasets}
We use publicly-available datasets from the UCI Repository\footnote{\url{archive.ics.uci.edu/ml}} and from the SNAP Datasets\footnote{\url{snap.stanford.edu/data/}}.

The datasets represent complete signed graphs from different domains, including both co-purchasing relationships among products, and semantic similarities among texts learned with embedding methods. The graphs are represented by complete signed matrices up to 1600x1600 in size, up to $0.9$ million positive edges, and up to $C=16$ colors. 
\ifappendix
Here we only briefly describe the datasets; for details see
Supplementary Material.

\else 
Here we only briefly describe the datasets, more details are in the appendix of the supplemental material.
\fi
\noindent {\bf \amazon}:
Vertices represents products on the Amazon website~\cite{leskovec2007dynamics}, the color is the item category, and two co-reviewed items  have a $+1$ weight edges (all non-co-reviewed items  have $-1$ weight edges). We use 1000 vertices equally distributed in two popular categories.

\noindent {\bf \reuters} and {\bf \victorian}:
These are extracted from text data used in previous fair clustering work~\cite{kdd19}.
The datasets include between $50$ and $100$ English language texts from each of up to 16 authors.\footnote{The datasets are available at \url{ archive.ics.uci.edu/ml/datasets/Reuter_50_50} and \url{archive.ics.uci.edu/ml/datasets/Victorian+Era+Authorship+Attribution}} Each vertex represents a text and the color represents the author. For each text we obtain a semantic embedding vector with standard methods. We use a threshold on the dot product of the embedding vectors to obtain the edges. Through this operation, we set the top $\theta \in \{0.25, 0.50, 0.75\}$ fraction of edges via dot products as $+1$'s, and the remaining edges are labeled $-1$.

\para{Algorithms}
We evaluate Algorithm~\ref{alg:main} in two fairness scenarios: with an upper bound of $\alpha = 1/2$  of the vertices for each color, and with $\alpha = 1/C$ for equal color representation (for the two-color case, the two are equivalent). For the $\alpha = 1/2$ case, in our experiments we simplify the algorithm of Section~\ref{sec:fair-equal-decomposition} to compute a minimum-cost perfect matching (1-factor)  instead of a 2-factor decomposition. This can be formally shown to be sufficient for the $C=2$ case, or when all optimal clusters are even sized, and we observe it works well in practice in our  experiments.
For $\alpha = 1/C$, we implement the repeated matching algorithm in~\ref{sec:fair-equal-decomposition} to obtain the fairlets in a similar fashion. After finding the fairlets, we use an in-house correlation clustering solver based on local search (\localsearch). We refer to our algorithms as \match + \localsearch for the $\alpha=1/2$ case and as \repmatch + \localsearch for the $\alpha=1/C$ case.

We also consider the following (unfair) baseline algorithms: the standard \pivot algorithm of Ailon et al.~\cite{acn} for correlation clustering (for \pivot, we repeat the randomized algorithm 10 times and use the best result), and the (unfair) local search heuristic \localsearch used as part of our algorithm. In addition, as no prior work has addressed the fair correlation clustering problem, we compare our algorithm with two simple fair baselines: the whole graph as one cluster \single, and a random fairlet decomposition \randbaseline. 

\begin{table*}
\small
\centering
\begin{tabular}{r|ccc}
\hline
Algorithm & \error & \imbalancecolor & \imbalancecolor \\
& & for 1/2 & for equality \\
\hline
\localsearch &0.249&0.011&0.218\\
\pivot & 0.345 &0.008 & 0.191 \\
\match + \localsearch &0.255	& 0	&0.180 \\
\repmatch + \localsearch &0.321& 0	& 0 \\
\single & 0.5 & 0 & 0 \\
\randbaseline	&0.5 & 0 & 0 \\
\hline
\end{tabular}
\caption{Experimental results for \victorian, $\theta = 0.50$, using $C=8$ colors.~\label{tab:results-victorian-8}}
\end{table*}

\para{Quality measures}
For each algorithm, we report the following measures. The \error of the correlation cost of the clustering obtained by the algorithm, presented as the ratio of the edges of the graph that are in disagreement with the clustering  (i.e.,  inter-cluster positive edges and  intra-cluster negative edges) over the number of edges (i.e., 0 corresponds to a perfect solution, whereas 1 corresponds to a completely incorrect solution). For fairness, we report the \imbalancecolor as the total fraction of vertices that violate the $\alpha$ color representation constraint, i.e., vertices for each cluster and color that are above the $\alpha$ fraction for the size of the cluster~\cite{kdd19}. More precisely, let $P$ be a cluster in the solution of an $\alpha$-color constraint instance. The maximum allowed number of points of a certain color in the cluster $P$ is $\lfloor |P|\alpha\rfloor$. Let $V_c$ be the vertices of color $c$ on the graph and $\Delta_P = \sum_{P,c} \max(|P \cap V_c| - \lfloor |P|\alpha\rfloor, 0)$ be the vertices violating the constraint in $P$. We report $\sum_P \Delta_P / |V|$ as the \imbalancecolor for $\alpha \in \{1/2, 1/C\}$ (i.e., 0 corresponds to no imbalance,  and 1 corresponds to complete imbalance). We repeat all algorithms 10 times and report the mean results for all measures. 

\subsection{Results for $C=2$}
Table~\ref{table:results-2-colors} shows the results for the $C=2$ color case with $\alpha=1/2$, i.e., equal representation over the various datasets.  We use the \match+ \localsearch as a fair algorithm, which has an \imbalancecolor of 0 by construction.

The table shows clearly that  our algorithm \match + \localsearch obtains clusters that are fair and has costs comparable to the unfair \pivot baseline (and sometimes better) and slightly worse than the \localsearch baseline. On average, over all datasets, our algorithm has an average cost of $0.234$ vs $0.193$ for \pivot and $0.139$ for \localsearch. On the other hand, our algorithm is significantly better than both 
\single and \randbaseline, which, while satisfying fairness, have exorbitantly high costs. 

Notice how the \localsearch and the \pivot baselines have very high \imbalancecolor values of up to $65\%$ of the vertices (as they are oblivious to colors), showing the importance of developing novel algorithms for the problem. This result is not surprising. If pairs of vertices of the same color are more likely to be similar, it is expected that many clusters will contain vast majorities of points with a single color.

\subsection{Results for $C>2$}
Here, we study the behavior of our algorithm in the case when more than two colors are present in the dataset. We use \match+ \localsearch to obtain a $\alpha=1/2$ fair solution and  \repmatch+ \localsearch  to obtain a $\alpha=1/C$ (equal representation) solution.

We report an overview of our experimental results in Table~\ref{tab:results-victorian-8} for the dataset \victorian with threshold $\theta = 0.50$ and $C=8$ colors (note that this is a different graph than that produced with the previous $C=2$ dataset).
\ifappendix 
More experimental results are available in 
Supplementary Material.
\else
More experimental results are available in the extended material.
\fi

It is easy to see that all the earlier trends continue to hold.  Notice how the algorithm for the $1/2$ case \match+ \localsearch is only marginally worse than the best unfair solution \localsearch and much better than all other baselines. Our algorithm for the more difficult equal representation case is again better than the \pivot baseline. Notice how all unfair algorithms are significantly far from being equally balanced. However, the presence of many colors makes it easier to get closer to the $1/2$ threshold.

\begin{figure}
\centering
\includegraphics[width=0.30\textwidth,keepaspectratio]{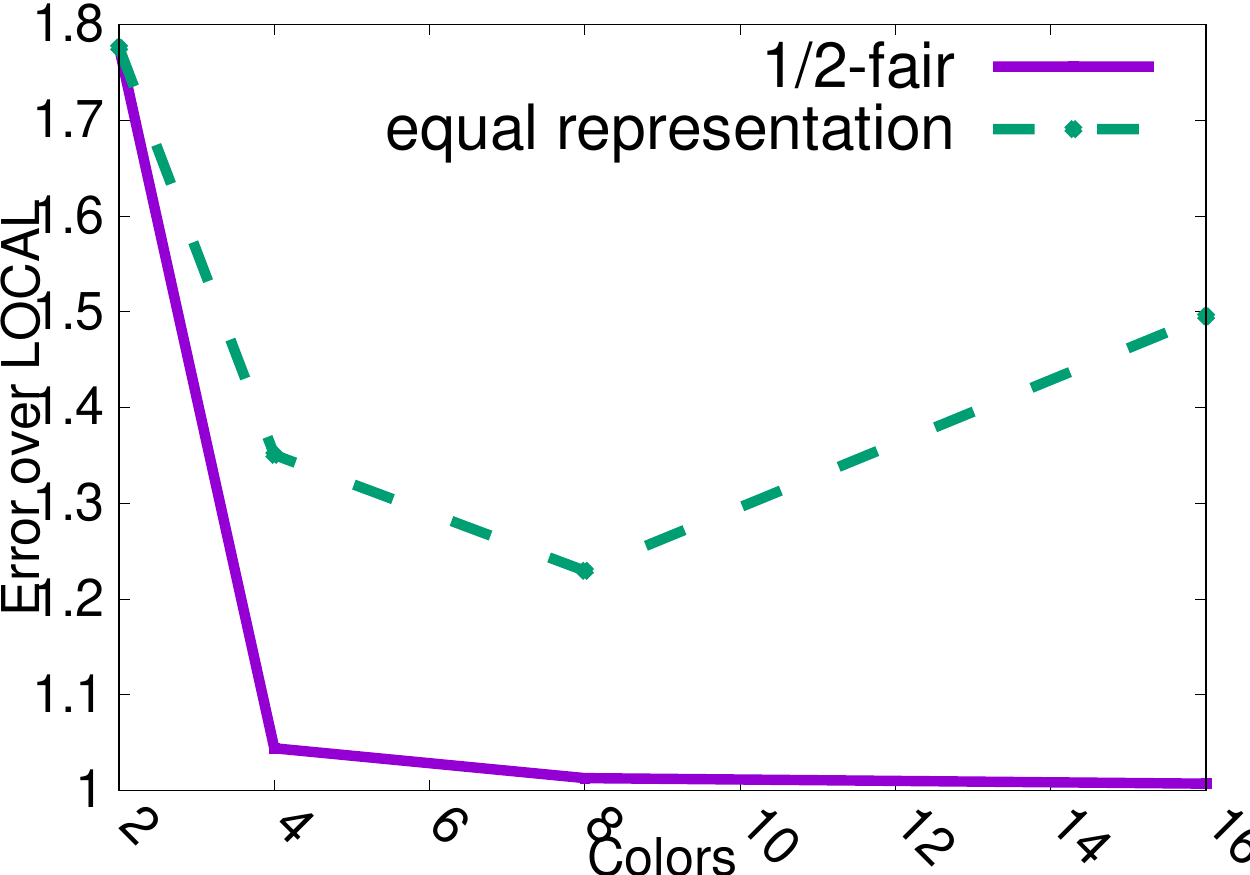}
\caption{\error of our algorithms over that of the unfair \localsearch algorithms for $\alpha=1/2$ and $\alpha=1/C$, on a series of graphs from \victorian, $\theta = 0.50$, and using $C=2$ to $C=16$ colors. \label{fig:result-color}}.
\end{figure}

We confirm this observation in Figure~\ref{fig:result-color}. The figure shows a comparison of the \error for our algorithms for $\alpha=1/2$ and $\alpha=1/C$ vs  the \error for the vanilla \localsearch algorithms as the number of colors goes from $C=2$ to $C=16$. We report the ratio of \match + \localsearch over \localsearch as a solid line, and the ratio for \repmatch + \localsearch as a dashed line. Notice how the error for our algorithm for the $\alpha=1/2$ case gets closer and closer to the \localsearch output error for more colors. Again, this result is obtained because the presence of many colors makes the problem easier for the $\alpha=1/2$ case. The performance of the algorithm for the $\alpha=1/C$ case has a less stable pattern, but it confirms that the algorithm is quite competitive with the unfair solution (between $30-80\%$ higher error) even for quite a few colors. These results are significantly better than what is expected from a worst-case analysis.  

Finally, we report having observed that using just \match or \repmatch fairlets without the re-clustering part of the algorithm is not sufficient for obtaining good results, as the re-clustering step is needed for obtaining clusters that do not have large errors.


\section{Conclusions}
In this paper we initiated the study of correlation clustering with fairness constraints. We showed a reduction to the fairlet decomposition problem with a standard median cost function, for a carefully chosen distance function. Using this, and old and new results on the fairlet decomposition problem with a median cost function, we obtained provable constant-factor approximation algorithms for fair correlation clustering for various notions of fairness. Our experimental evaluation shows that these algorithms perform well not only in theory but also in practice.

\balance
\bibliography{paper}

\begin{thebibliography}{10}

\bibitem{kdd19}
S.~Ahmadian, A.~Epasto, R.~Kumar, and M.~Mahdian.
\newblock Clustering without over-representation.
\newblock In {\em KDD}, pages 267--275, 2019.

\bibitem{DBLP:conf/focs/AhmadianNSW17}
S.~Ahmadian, A.~Norouzi{-}Fard, O.~Svensson, and J.~Ward.
\newblock Better guarantees for $k$-means and euclidean $k$-median by
  primal-dual algorithms.
\newblock In {\em FOCS}, pages 61--72, 2017.

\bibitem{ahn2015correlation}
K.~Ahn, G.~Cormode, S.~Guha, A.~McGregor, and A.~Wirth.
\newblock Correlation clustering in data streams.
\newblock In {\em ICML}, pages 2237--2246, 2015.

\bibitem{acn}
N.~Ailon, M.~Charikar, and A.~Newman.
\newblock Aggregating inconsistent information: Ranking and clustering.
\newblock {\em J. {ACM}}, 55(5):23:1--23:27, 2008.

\bibitem{scalablefair}
A.~Backurs, P.~Indyk, K.~Onak, B.~Schieber, A.~Vakilian, and T.~Wagner.
\newblock Scalable fair clustering.
\newblock In {\em ICML}, pages 405--413, 2019.

\bibitem{Bansal2004}
N.~Bansal, A.~Blum, and S.~Chawla.
\newblock Correlation clustering.
\newblock {\em Mach. Learn.}, 56(1-3):89--113, 2004.

\bibitem{bera2019fair}
S.~K. Bera, D.~Chakrabarty, N.~Flores, and M.~Negahbani.
\newblock Fair algorithms for clustering.
\newblock In {\em NeurIPS}, pages 4955--4966, 2019.

\bibitem{bercea2018fair}
I.~O. Bercea, M.~Gross, S.~Khuller, A.~Kumar, C.~Rösner, D.~R. Schmidt, and
  M.~Schmidt.
\newblock On the cost of essentially fair clusterings.
\newblock In {\em APPROX-RANDOM}, pages 18:1--18:22, 2019.

\bibitem{bressan2019correlation}
M.~Bressan, N.~Cesa-Bianchi, A.~Paudice, and F.~Vitale.
\newblock Correlation clustering with adaptive similarity queries.
\newblock In {\em NeurIPS}, pages 12510--12519, 2019.

\bibitem{calders2010three}
T.~Calders and S.~Verwer.
\newblock Three naive bayes approaches for discrimination-free classification.
\newblock {\em DMKD}, 21(2):277--292, 2010.

\bibitem{celis2018multiwinner}
L.~E. Celis, L.~Huang, and N.~K. Vishnoi.
\newblock Multiwinner voting with fairness constraints.
\newblock In {\em IJCAI}, pages 144--151, 2018.

\bibitem{celis2017ranking}
L.~E. Celis, D.~Straszak, and N.~K. Vishnoi.
\newblock Ranking with fairness constraints.
\newblock In {\em ICALP}, pages 28:1--28:15, 2018.

\bibitem{charikar2005clustering}
M.~Charikar, V.~Guruswami, and A.~Wirth.
\newblock Clustering with qualitative information.
\newblock {\em JCSS}, 71(3):360--383, 2005.

\bibitem{CMSY}
S.~Chawla, K.~Makarychev, T.~Schramm, and G.~Yaroslavtsev.
\newblock Near optimal {LP} rounding algorithm for correlation clustering on
  complete and complete $k$-partite graphs.
\newblock In {\em STOC}, pages 291--228, 2015.

\bibitem{chen2019proportionally}
X.~Chen, B.~Fain, C.~Lyu, and K.~Munagala.
\newblock Proportionally fair clustering.
\newblock In {\em ICML}, pages 1032--1041, 2019.

\bibitem{chierichetti2017fair}
F.~Chierichetti, R.~Kumar, S.~Lattanzi, and S.~Vassilvitskii.
\newblock Fair clustering through fairlets.
\newblock In {\em NIPS}, pages 5029--5037, 2017.

\bibitem{matroidfair}
F.~Chierichetti, R.~Kumar, S.~Lattanzi, and S.~Vassilvitskii.
\newblock Matroids, matchings, and fairness.
\newblock In {\em AISTATS}, pages 2212--2220, 2019.

\bibitem{dash2019summarizing}
A.~Dash, A.~Shandilya, A.~Biswas, K.~Ghosh, S.~Ghosh, and A.~Chakraborty.
\newblock Summarizing user-generated textual content: Motivation and methods
  for fairness in algorithmic summaries.
\newblock In {\em CSCW}, pages 1--28, 2019.

\bibitem{demaine2006correlation}
E.~D. Demaine, D.~Emanuel, A.~Fiat, and N.~Immorlica.
\newblock Correlation clustering in general weighted graphs.
\newblock {\em TCS}, 361(2-3):172--187, 2006.

\bibitem{dwork2012fairness}
C.~Dwork, M.~Hardt, T.~Pitassi, O.~Reingold, and R.~Zemel.
\newblock Fairness through awareness.
\newblock In {\em ITCS}, pages 214--226, 2012.

\bibitem{elzayn2019fair}
H.~Elzayn, S.~Jabbari, C.~Jung, M.~Kearns, S.~Neel, A.~Roth, and Z.~Schutzman.
\newblock Fair algorithms for learning in allocation problems.
\newblock In {\em FAT}, pages 170--179, 2019.

\bibitem{fang2019achieving}
B.~Fang, M.~Jiang, and J.~Shen.
\newblock Achieving fairness in determining medicaid eligibility through
  fairgroup construction.
\newblock {\em arXiv:1906.00128}, 2019.

\bibitem{feldman2015certifying}
M.~Feldman, S.~A. Friedler, J.~Moeller, C.~Scheidegger, and
  S.~Venkatasubramanian.
\newblock Certifying and removing disparate impact.
\newblock In {\em KDD}, pages 259--268, 2015.

\bibitem{fish2016confidence}
B.~Fish, J.~Kun, and {\'A}.~D. Lelkes.
\newblock A confidence-based approach for balancing fairness and accuracy.
\newblock In {\em SDM}, pages 144--152, 2016.

\bibitem{gonzalez1985clustering}
T.~F. Gonzalez.
\newblock Clustering to minimize the maximum intercluster distance.
\newblock {\em TCS}, 38:293--306, 1985.

\bibitem{gungor2018fifty}
A.~Gungor.
\newblock Fifty victorian era novelists authorship attribution data.
\newblock 2018.

\bibitem{har2019near}
S.~Har-Peled and S.~Mahabadi.
\newblock Near neighbor: Who is the fairest of them all?
\newblock {\em arXiv:1906.02640}, 2019.

\bibitem{hochbaum1985best}
D.~S. Hochbaum and D.~B. Shmoys.
\newblock A best possible heuristic for the $k$-center problem.
\newblock {\em MOR}, 10(2):180--184, 1985.

\bibitem{huang2019coresets}
L.~Huang, S.~H.~C. Jiang, and N.~K. Vishnoi.
\newblock Coresets for clustering with fairness constraints.
\newblock In {\em NeurIPS}, pages 7587--7598, 2019.

\bibitem{jain2010data}
A.~K. Jain.
\newblock Data clustering: 50 years beyond $k$-means.
\newblock {\em Pattern Recognition Letters}, 31(8):651--666, 2010.

\bibitem{joseph2016fairness}
M.~Joseph, M.~Kearns, J.~H. Morgenstern, and A.~Roth.
\newblock Fairness in learning: Classic and contextual bandits.
\newblock In {\em NIPS}, pages 325--333, 2016.

\bibitem{kalhan2019improved}
S.~Kalhan, K.~Makarychev, and T.~Zhou.
\newblock Improved algorithms for correlation clustering with local objectives.
\newblock In {\em NeurIPS}, pages 9341--9350, 2019.

\bibitem{kamishima2012fairness}
T.~Kamishima, S.~Akaho, H.~Asoh, and J.~Sakuma.
\newblock Fairness-aware classifier with prejudice remover regularizer.
\newblock In {\em PKDD}, pages 35--50, 2012.

\bibitem{kamishima2011fairness}
T.~Kamishima, S.~Akaho, and J.~Sakuma.
\newblock Fairness-aware learning through regularization approach.
\newblock In {\em ICDMW}, pages 643--650, 2011.

\bibitem{kanungo2004local}
T.~Kanungo, D.~M. Mount, N.~S. Netanyahu, C.~D. Piatko, R.~Silverman, and A.~Y.
  Wu.
\newblock A local search approximation algorithm for $k$-means clustering.
\newblock {\em Computational Geometry}, 28(2-3):89--112, 2004.

\bibitem{kcenterfair}
M.~Kleindessner, P.~Awasthi, and J.~Morgenstern.
\newblock Fair $k$-center clustering for data summarization.
\newblock In {\em ICML}, pages 3458--3467, 2019.

\bibitem{spectralfair}
M.~Kleindessner, S.~Samadi, P.~Awasthi, and J.~Morgenstern.
\newblock Guarantees for spectral clustering with fairness constraints.
\newblock In {\em ICML}, pages 3448--3457, 2019.

\bibitem{kushagra2018semi}
S.~Kushagra, S.~Ben-David, and I.~Ilyas.
\newblock Semi-supervised clustering for de-duplication.
\newblock {\em arXiv:1810.04361}, 2018.

\bibitem{leskovec2007dynamics}
J.~Leskovec, L.~A. Adamic, and B.~A. Huberman.
\newblock The dynamics of viral marketing.
\newblock {\em TWEB}, 1(1):5, 2007.

\bibitem{liyizhang}
J.~Li, K.~Yi, and Q.~Zhang.
\newblock Clustering with diversity.
\newblock In {\em ICALP}, pages 188--200, 2010.

\bibitem{DBLP:journals/siamcomp/LiS16}
S.~Li and O.~Svensson.
\newblock Approximating $k$-median via pseudo-approximation.
\newblock {\em SICOMP}, 45(2):530--547, 2016.

\bibitem{pan2015parallel}
X.~Pan, D.~Papailiopoulos, S.~Oymak, B.~Recht, K.~Ramachandran, and M.~I.
  Jordan.
\newblock Parallel correlation clustering on big graphs.
\newblock In {\em NIPS}, pages 82--90, 2015.

\bibitem{pouget2019variance}
J.~Pouget-Abadie, K.~Aydin, W.~Schudy, K.~Brodersen, and V.~Mirrokni.
\newblock Variance reduction in bipartite experiments through correlation
  clustering.
\newblock In {\em NeurIPS}, pages 13288--13298, 2019.

\bibitem{rosner2018privacy}
C.~R{\"o}sner and M.~Schmidt.
\newblock Privacy preserving clustering with constraints.
\newblock In {\em ICALP}, pages 96:1--96:14, 2018.

\bibitem{schrijver2003combinatorial}
A.~Schrijver.
\newblock {\em Combinatorial Optimization: Polyhedra and Efficiency},
  volume~24.
\newblock Springer Science \& Business Media, 2003.

\bibitem{yang2017measuring}
K.~Yang and J.~Stoyanovich.
\newblock Measuring fairness in ranked outputs.
\newblock In {\em SSDBM}, pages 22:1--22:6, 2017.

\bibitem{ziko2019clustering}
I.~M. Ziko, E.~Granger, J.~Yuan, and I.~B. Ayed.
\newblock Clustering with fairness constraints: A flexible and scalable
  approach.
\newblock {\em arXiv:1906.08207}, 2019.

\end{thebibliography}
\bibliographystyle{abbrv}
\balance

\ifappendix

\newpage
\appendix
\section{Deferred Proofs of Section \ref{sec:overview}}
\label{sec:proofs1}

\begin{lemmaagain}{lem:1}
Given a correlation clustering instance $G$, a fairlet decomposition $\CP$ for $G$, and a clustering $\CC$ of $G$, there exists a clustering $\CC'$ of $G^\CP$ such that
\[
    \cost(G^\CP,\CC') \leq \cost(G,\CC) + \out(\CP).
\]
\end{lemmaagain}

\begin{proof}
To show existence of the claimed clustering $\CC'$, we devise a randomized algorithm and bound the expected cost of the clustering output of this algorithm. We abuse notation and let clustering $\CC'$ be the output of this randomized algorithm on $G^\CP$. Given fairlet decomposition $\CP$, the algorithm first picks a representative $r_i$ for each partition $P_i$ in $\CP$ uniformly at random. Next the algorithm defines clustering $\CC'$ based on where $r_i$ is assigned in $\CC$: if $r_i$ is placed in cluster $C_j$, it places the vertex $p_i$ of $G^\CP$ in the cluster $C'_j$. 

Next, we bound the expected cost of $\CC'$ in $G^\CP$. Let us fix two vertices $p_i$ and $p_j$ in $G^\CP$.  We first consider the case $\sigma(p_i, p_j) > 0$; the other case follows by a similar argument.  The clustering $\CC'$ incurs a cost of $|\sigma(p_i, p_j)|$ if $p_i$ and $p_j$ are assigned to different clusters, which happens when the two representatives $r_i$ and $r_j$ are in different clusters in $\CC$. Now, if $\sigma(r_i, r_j) = +1$, then  $\CC$ also pays a cost of $1$ for separating $r_i$ and $r_j$ and if $\sigma(r_i, r_j) = -1$, then the edge $(r_i, r_j)$ is an edge with the minority sign and it contributes to $\out(P_i, P_j)$.  Hence, if we denote the cost of the edge between $p_i$ and $p_j$ in the clustering $\CC'$ by $\cost_{\CC'}(p_ip_j)$, we can bound the expected value of this cost as follows:
 \newcommand{\ind}[1]{\mathds{1}\left(#1\right)}
\begin{eqnarray*}
\expct{\cost_{\CC'}(p_i, p_j)} 
& \leq & |\sigma(p_i, p_j)| \cdot E[\cost_{\CC}(r_i,r_j) \\
& & + \ind{\sigma(p_i,p_j) * \sigma(r_i,r_j) < 0}],
\end{eqnarray*}
where $\ind{A}$ is an indicator function having value 1 if $A$ is true and zero otherwise. Now since $r_i$ and $r_j$ are picked uniformly at random, 
\begin{eqnarray*}
\expct{\cost_{\CC'}(p_i, p_j)}
&\\ \leq  \frac{|\sigma(p_i, p_j)|}{\size{P_i} \cdot \size{P_j}} \cdot (\cost_{\CC}(P_i,P_j) + \out(P_i, P_j)). &
\end{eqnarray*}
This follows from the fact that there are $\size{P_i} \cdot \size{P_j}$ many possible pairs $(r_i,r_j)$ to be selected. Summing the above over all $p_i, p_j$ and using $|\sigma(p_i, p_j)| \leq |P_i|\cdot |P_j|$, we get
$$
\expct{\cost(G^\CP, \CC')} \leq \cost(\CC) + \out(\CP).
\qedhere
$$
\end{proof}

\begin{lemmaagain}{lem:2}
Assume $\CC$ is a clustering of $G^{\CP}$, and let $\CC'$ be the clustering computed in line 4 of Algorithm~\ref{alg:main} for $G$. Then we have
\[
\cost(G,\CC') \leq \cost(G^\CP,\CC) + \fcost(\CP).
\]
\end{lemmaagain}
\begin{proof}
Any edge $(u, v)$ contributing to the cost of the clustering $\CC'$ is either a negative edge inside a fairlet or an edge between two fairlets that are clustered in disagreement with $\sigma(u,v)$ in $\CC$. Negative edges inside fairlets are counted in $\inn(\CP)$. An edge $(u, v)$ between fairlets $P_i$ and $P_j$ that is clustered in disagreement with $\sigma(u,v)$ either has the same sign as the  majority sign of $E(P_i,P_j)$, or as the minority sign of $E(P_i,P_j)$. The edges in the former case are counted in  $\cost(G^\CP,\CC)$, and the edges in the latter case are counted in $\out(\CP)$. Therefore, the total cost of $\CC'$ is at most $\cost(G^\CP,\CC) + \inn(\CP) + \out(\CP)$.
\end{proof}

\begin{lemmaagain}{lem:3}
For any constrained correlation clustering instance $G$, and any constrained clustering $\CC$ of $G$, there is a fairlet decomposition $\CP$ of $G$ satisfying $\fcost(\CP)\le \cost(G,\CC)$.
\end{lemmaagain} 
\begin{proof}
We can simply take $\CP$ to be the same as the clustering $\CC$. It is easy to observe that this is a valid fairlet decomposition. To bound $\fcost(\CP)$, it is enough to note that each edge counted in $\inn(\CP)$ also imposes a cost of 1 in $\CC$ (as it is a negative edge inside a cluster), and for any two clusters $C_i$ and $C_j$, the number of positive edges between $C_i$ and $C_j$ is at least $\out(C_i,C_j)$. Summing over all these inequalities, we obtain that $\fcost(\CP)$ is at most $\cost(G,\CC)$.
\end{proof}

\section{Deferred Proofs of Section \ref{sec:decomposition}}
\label{sec:proofs2}

\begin{lemmaagain}{lem:medbound1}
For any fairlet decomposition $\CP$, we have 
\[
\mcost(\CP) \leq 2 \cdot \fcost(\CP).
\]
\end{lemmaagain}
\begin{proof}
Consider a fairlet $P_i$ in $\CP$. We define a vector $\mu\in[0,1]^n$ indexed by the vertices of $G$ as follows: $\mu_u = {\rm majority}(\{\phi(v)_u : v \in P_i\})$. By the definition of $\mcost$, we have
\begin{eqnarray}\label{eq:medcost}
\mcost(P_i)&=&\min_{u\in [0,1]^n}\sum_{v\in P_i} d(u,v) \nonumber\\
&\leq&\sum_{v \in P_i} |\mu - \phi(v)|. 
\end{eqnarray}
On the other hand, for every vertex $u$, if we denote $N^-(u) = \{v: (u, v) \in E^-\}$ and
$N^+(u) = \{u\}\cup\{v: (u, v) \in E^+\}$, we have
\begin{eqnarray*}
\lefteqn{
\sum_{v \in P_i} |\mu_u - \phi(v)_u|  
= |\{v\in P_i: \phi(v)_u \neq \mu_u|\}|} \\
&=& \min(|N^-(u)\cap P_i|, |N^+(u)\cap P_i|).
\end{eqnarray*}
For $u\not\in P_i$, the above quantity is precisely $\out(P_i,\{u\})$. For $u\in P_i$, the above quantity is at most $|N^-(u)\cap P_i|$, which is the number of negative edges in $P_i$ incident on $u$. Therefore, the sum of this quantity over all $u$ can be bounded by

\begin{eqnarray}
\label{eqn:medcost2}
\lefteqn{
\sum_{u\in V}\sum_{v \in P_i} |\mu_u - \phi(v)_u|} \\
& \le 2\cdot\inn(P_i)
+\sum_{u\in V\setminus P_i}\out(P_i,\{u\}).\nonumber
\end{eqnarray}

Finally, by the definition of $\out$, we have $\out(P_i,S)+\out(P_i,T)\le\out(P_i,S\cup T)$ for any two disjoint sets $S$ and $T$. Therefore, $$\sum_{u\in V\setminus P_i}\out(P_i,\{u\})\le \sum_{j\neq i} \out(P_i,P_j).$$ 
Combining this with Equations~\eqref{eq:medcost} and \eqref{eqn:medcost2}, we get $\mcost(\CP) \leq 2\cdot\inn(\CP) + \out(\CP)\le 2\cdot\fcost(\CP)$.
\end{proof}

\begin{lemmaagain}{lem:medbound2}
Let $\CP$ be any fairlet decomposition and let $f = \max_{P \in \CP} |P|$.  Then, 
\[
\fcost(\CP) \leq 2f \cdot \mcost(\CP).
\]
\end{lemmaagain}
\begin{proof}
Consider a fairlet $P_i$ and define the vector $\mu$ as in the proof of Lemma~\ref{lem:medbound1}. It is easy to see that
\begin{eqnarray*}
\mcost(P_i) &=&
\min_{x\in[0,1]^n}\sum_{v\in P_i}|x-\phi(v)|\\
&=&|\mu-\phi(v)|\\
&=&\sum_{u\in V}\min(|N^-(u)\cap P_i|, |N^+(u)\cap P_i|).
\end{eqnarray*}

As in the proof of Lemma~\ref{lem:medbound1}, for every $u\not\in P_i$, the summand in the above expression is precisely
$\out(P_i,\{u\})$. For $u\in P_i$, this quantity is zero if $u$ has no negative edge to any other vertex in $P_i$, and is at least 1 otherwise. Therefore, since $|P_i|\le f$, this quantity is always at least the number of negative edges from $u$ to other vertices in $P_i$ divided by $f$. Therefore,

\begin{eqnarray*}
\lefteqn{
\mcost(P_i)} \\
& \ge & \frac2f\cdot\inn(P_i) + \sum_{j:j\neq i}\sum_{u\in P_j}\out(P_i,\{u\}).
\end{eqnarray*}

Summing over all $i$ and rearranging the terms, we obtain:

\begin{eqnarray}
\label{eqn:mcostlb}
\mcost(\CP) &\ge& \frac2f\cdot\inn(\CP)\\
&&\,+\sum_{i<j}\left(\sum_{u\in P_j}\out(P_i,\{u\})\right.\nonumber\\
&&\hspace{15mm}\,\left.
+\sum_{u\in P_i}\out(P_j,\{u\})\right).\nonumber
\end{eqnarray}

Now, we fix $i<j$ and bound the summand in the above expression. Without loss of generality, we assume $|P_i|\ge |P_j|$. We consider the following cases:

\begin{description}
\item[Case 1: ] There are at most $|P_i|/2$ vertices $u$ in $P_i$ with $\out(P_j,\{u\}) = 0$. In this case, we have
$\sum_{u\in P_i}\out(P_j,\{u\}) \ge \frac{|P_i|}2\ge\frac{|P_i|\cdot|P_j|}{2f}\ge \frac{\out(P_i,P_j)}{2f}$.
\item[Case 2: ] There are at least $|P_i|/2$ vertices $u$ in $P_i$ with $\out(P_j,\{u\}) = 0$. Let $S$ be the set of such vertices. By the definition of $\out(P_j,\{u\})$, any $u\in S$ must have either positive edges to all vertices in $P_j$, or negative edges to all of them. Assume $x$ vertices in $S$ have positive edges to all vertices in $P_j$ and $y$ of them have negative edges, for some $x, y$ with $x+y=|S|\ge |P_i|/2$. We further consider the following cases:
\begin{description}
\item[Case 2a: ] If $x=0$, then every vertex $u$ in $P_j$ has at least $P_i/2$ positive edges to vertices in $P_i$ (namely, at least to those in $S$). Therefore, $\out(P_i,\{u\}) = |E^-\cap E(P_i,\{u\})|$. Thus, $\sum_{u\in P_j}\out(P_i,\{u\}) = 
|E^-\cap E(P_i,P_j)|\ge\out(P_i,P_j).$
\item[Case 2b: ] If $y=0$, an argument similar to case 2a shows that $\sum_{u\in P_j}\out(P_i,\{u\}) = 
|E^+\cap E(P_i,P_j)|\ge\out(P_i,P_j).$
\item[Case 2c: ] If $x \ge 1$ and $y\ge 1$, then each vertex in $P_j$ has at least one positive edge and at least one negative edge to $P_i$. Therefore, for every $u\in P_j$, $\out(P_i,\{u\})\ge 1$. Thus, $\sum_{u\in P_j}\out(P_i,\{u\})\ge |P_j| \ge \frac{|P_i|\cdot|P_j|}{f}\ge \frac{\out(P_i,P_j)}{f}$.
\end{description}
\end{description}
In all of the above cases, we have:
\begin{eqnarray*}
\displaystyle\sum_{u\in P_j}\out(P_i,\{u\})+\sum_{u\in P_i}\out(P_j,\{u\}) \\
\\ \hspace*{50mm} \ge \frac{\out(P_i,P_j)}{2f}.
\end{eqnarray*}
This, together with \eqref{eqn:mcostlb}, implies:
\begin{eqnarray*}
\mcost(\CP)&\ge& \frac2f\cdot\inn(\CP)+\frac1{2f}\cdot\out(\CP)
\\&\ge& \frac1{2f}\cdot\fcost(\CP).
\qedhere
\end{eqnarray*}
\end{proof}

\begin{thmagain}{thm:dummy1}
For $\alpha =1/2$, there is a $256$-approximation algorithm for fair correlation clustering.
\end{thmagain}
\begin{proof}
From Theorem~\ref{thm:medianreduction} and Lemma~\ref{lem:fair-decom-half}, we get a $24$-approximation algorithm for solving fairlet decomposition with minimum $\fcost$ ($f = 3, \gamma = 2$). From Lemma~\ref{lem:unconstraited-lem}, there is a $2\cdot 2.06 \cdot (1.5)^2 = 9.27$-approximation algorithm for unconstrained correlation clustering.  Combining, we get a $255.75$-approximation algorithm for fair correlation clustering. 
\end{proof}

\begin{thmagain}{thm:dummy2}
For $\alpha =1/C$, there is a $(16.48 C^2)$-approximation algorithm for fair correlation clustering.
\end{thmagain}
\begin{proof}
From Theorem~\ref{thm:medianreduction} and Lemma~\ref{lem:fair-decom-half}, we get a $4C^2$-approximation algorithm for solving fairlet decomposition with minimum $\fcost$ ($f = C, \gamma = C$).  From  Lemma~\ref{lem:unconstraited-lem}, there is a $2\cdot 2.06 \cdot 1 = 4.12$-approximation algorithm for unconstrained correlation clustering.  Combining, we get a $(16.48 C^2)$-approximation algorithm for fair correlation clustering. 
\end{proof}

\begin{thmagain}{thm:dummy3}
For $\alpha = 1/t$, given an $\gamma$-approximation for fair decomposition with median cost, there exists an  $O(t\gamma)$-approximation algorithm for fairlet correlation clustering.  
\end{thmagain}
\begin{proof}
Let $\CP$ be the output of the output of the $\gamma$-approximation algorithm on the metric space $(M,d)$ obtained from correlation instance $G$. Let fairlet decomposition $\CP'$ be obtained from $\CP$ by applying Lemma~\ref{lem:fairletpartition} and assigning each fairlet to a center minimizing the median cost of the fairlet. Since dedicating a center to a subset of points assigned to the same center in $\CP$ can only decrease the median cost, $\mcost(\CP') \leq \mcost(\CP)$. 
From Theorem~\ref{thm:medianreduction} and Lemma~\ref{lem:fair-decom-half}, there is a $((8t - 4) \gamma)$-approximation algorithm for solving fairlet decomposition with minimum $\fcost$ ($f = 2t - 1, \gamma = \gamma_\cA$). Since the size of each fairlet is at least $t$, applying Lemma~\ref{lem:unconstraited-lem}, there is a $2\cdot 2.06 \cdot 
(\frac{2t - 1} {t})^2 < 16.48$-approximation algorithm for solving unconstrained correlation clustering. Now applying Theorem~\ref{thm:fairletreduction}, we get an $O(t\gamma)$-approximation algorithm for fair correlation clustering.
\end{proof}

\section{Supplemental Experimental Results}

Here, we report additional experimental results.

\label{appendix:exp}
\subsection{Description of the datasets}
We describe more in detail the datasets used.

\noindent {\bf \amazon}:
Vertices represents products on the Amazon website~\cite{leskovec2007dynamics} and positive edges connect products co-reviewed by the same user (all missing edges are treated as negative). We set the color of each item to its category. Further, we use 1000 vertices equally distributed among 2 popular book categories {\it Nonfiction} and {\it Literature \& Fiction} for a total of $\sim 106{,}000$ positive edges.

\noindent {\bf \reuters}:
This graph is extracted from a dataset, which was used in previous fair clustering work~\cite{kdd19} and includes 50 English language articles from each of up to 16 authors (for a total of up to 800 texts).This dataset is available at \url{ archive.ics.uci.edu/ml/datasets/Reuter_50_50}. We transform each text into a 10-dimensional vector using Gensim’s Doc2Vec with standard parameters, as in previous work~\cite{kdd19}, and we create one vertex for each text. Then we use a threshold on the dot product of the embedding vectors. Through this operation, we set the top $\theta \in \{0.25, 0.50, 0.75\}$ fraction of edges via dot products as $+1$'s ,and the remaining edges are assigned $-1$'s.  Note that the colors represent the text authors.

\noindent {\bf victorian}:
Similarly, for the victorian dataset, available at \url{archive.ics.uci.edu/ml/datasets/Victorian+Era+Authorship+Attribution}. We use texts from up to 16 English-language authors from the Victorian era. Each text consists of $1{,}000$-word sequences obtained from a book written by one of these authors (we use the training dataset). The data was extracted and processed in~\cite{gungor2018fifty}. From each document, we extract a 10-dimensional vector using Gensim's Doc2Vec with the standard parameter settings again, and we assign the author id as color, as in prior work~\cite{kdd19}. We use 100 texts from each author, create one vertex for each text, and set the top $\theta \in \{0.25, 0.50, 0.75\}$ fraction of pairwise dot product edges as positive, and the remaining edges as negative. All graphs are unweighted and complete.

\subsection{Other experimental results}

\begin{table*}
\centering
\begin{tabular}{r|ccc}
\hline
Algorithm & \error & \imbalancecolor & \imbalancecolor \\
& & for 1/2 & for equality \\
\hline 
\localsearch &0.005&0.375&0.541\\
\pivot &0.009 &0.365  & 0.529 \\ 
\match + \localsearch &0.006	& 0	&0.518\\ 
\repmatch + \localsearch &0.070& 0	& 0 \\
\single & 0.828 & 0 & 0 \\
\randbaseline	&0.173 & 0 & 0 \\
\hline
\end{tabular}
\caption{Experimental results for \amazon, $C=4$ colors.  \label{tab:results-amazon-4}}
\end{table*}

In Table~\ref{tab:results-amazon-4} we report an overview of the results of the various algorithms for a dataset extracted from Amazon involving 250 vertex for each of 4 colors corresponding to the book categories {\it Literature \& Fiction, Nonfiction, Business \& Investing, Computers \& Internet}.

Similarly in Table~\ref{tab:results-reuters-8} we report the results for \reuters, $\theta=0.50$, $c=8$ colors.

\begin{table*}
\centering
\begin{tabular}{r|ccc}
\hline
Algorithm & \error & \imbalancecolor & \imbalancecolor \\
& & for 1/2 & for equality \\
\hline 
\localsearch &0.239&0.036&0.344\\
\pivot &0.35 &0.024& 0.298 \\ 
\match + \localsearch &0.251	& 0	&0.310\\ 
\repmatch + \localsearch &0.416& 0	& 0 \\
\single & 0.501 & 0 & 0 \\
\randbaseline	&0.500 & 0 & 0 \\
\hline
\end{tabular}
\caption{Experimental results for \reuters, $\theta=0.50$, $C=8$ colors. \label{tab:results-reuters-8}}
\end{table*}

Finally, in Table~\ref{tab:results-others} we report an evaluation of our algorithms in a variety of datasets and for different number of colors.

Notice how in all cases the results matches qualitatively the results reported in the main  paper.

\begin{table*}
\centering
\begin{tabular}{ll|rr}
\toprule
               &    &  \error  &  \error \\
                &    &  \localsearch   &  \repmatch + \\
                 &    &    & \localsearch \\
dataset & $C$ &        \\
\midrule
\reuters, $\theta=0.25$ & 2  &  0.096 &  0.230 \\
               & 4  &  0.120 &  0.244 \\
               & 8  &  0.133  &  0.252 \\
               & 16 &  0.146 &  0.255 \\
\hline               
\reuters, $\theta=0.50$ & 2  &  0.181 &  0.350 \\
               & 4  &  0.191 &  0.336 \\
               & 8  &  0.239&  0.416 \\
               & 16 &  0.258&  0.391 \\
\hline               
\reuters, $\theta=0.75$ & 2  &  0.188 & 0.199 \\
               & 4  &  0.211 &  0.227 \\
               & 8  &  0.237 &  0.250 \\
               & 16 &  0.220 &  0.250 \\
\hline               
\victorian, $\theta=0.25$ & 2  &  0.109 &  0.212 \\
               & 4  &  0.141 &  0.210 \\
               & 8  &  0.161 &  0.212 \\
               & 16 &  0.150&  0.222 \\
\hline               
\victorian, $\theta=0.50$ & 2  &  0.183 &  0.348 \\
               & 4  &  0.228 &  0.311 \\
               & 8  &  0.249 &  0.319 \\
               & 16 &  0.232  &  0.343 \\
\hline               
\victorian, $\theta=0.75$ & 2  &  0.203  &  0.237 \\
               & 4  &  0.225 &  0.245 \\
               & 8  &  0.218 &  0.246 \\
               & 16 &  0.215  &  0.250 \\
\bottomrule
\end{tabular}
\caption{Experimental results for various datasets and number of colors. \label{tab:results-others}}
\end{table*}

\paragraph{Additional baselines.}
We further experimented with two other greedy baselines.  First, we tried the following (unfair) greedy baseline: in an arbitrary order, iterate over the vertices, and for each vertex, add it to either the current cluster with most positive neighbors (if it exists) or to a singleton cluster.  More precisely, we assign the vertex to the best current cluster, if it is connected with more positive edges than negative edges to it, otherwise we leave the vertex as a singleton. Unsurprisingly, this unfair baseline is worse than all other unfair baselines we considered in terms of error and it has also a large imbalance, so we omit the results.

We also tested a fair greedy baseline for $\alpha=\frac{1}{2}$, for $C=2$: sort all pairs of different color vertices by distance in the Hamming space in an increasing order, and assign vertices to clusters of size $2$ with a greedy matching algorithm over this order. This creates fair clusters but again, we observe that this baseline to be close to that of \randbaseline and as such we omit the results.

\paragraph{Running time.}
All experiments have been conducted on commodity hardware. Each run of an algorithm completed in less than an hour. In our experiments, our fair algorithms have a running time in the same order of magnitude of that of the local search heuristic. For instance, for \reuters, $\theta=0.50$,  the ratio of mean running time of  \match+\localsearch  and  \repmatch + \localsearch w.r.t. \localsearch  was $90\%$ and $29\%$, respectively.  For \victorian, $\theta=0.50$ it was $123\%$ and $41\%$, respectively. 
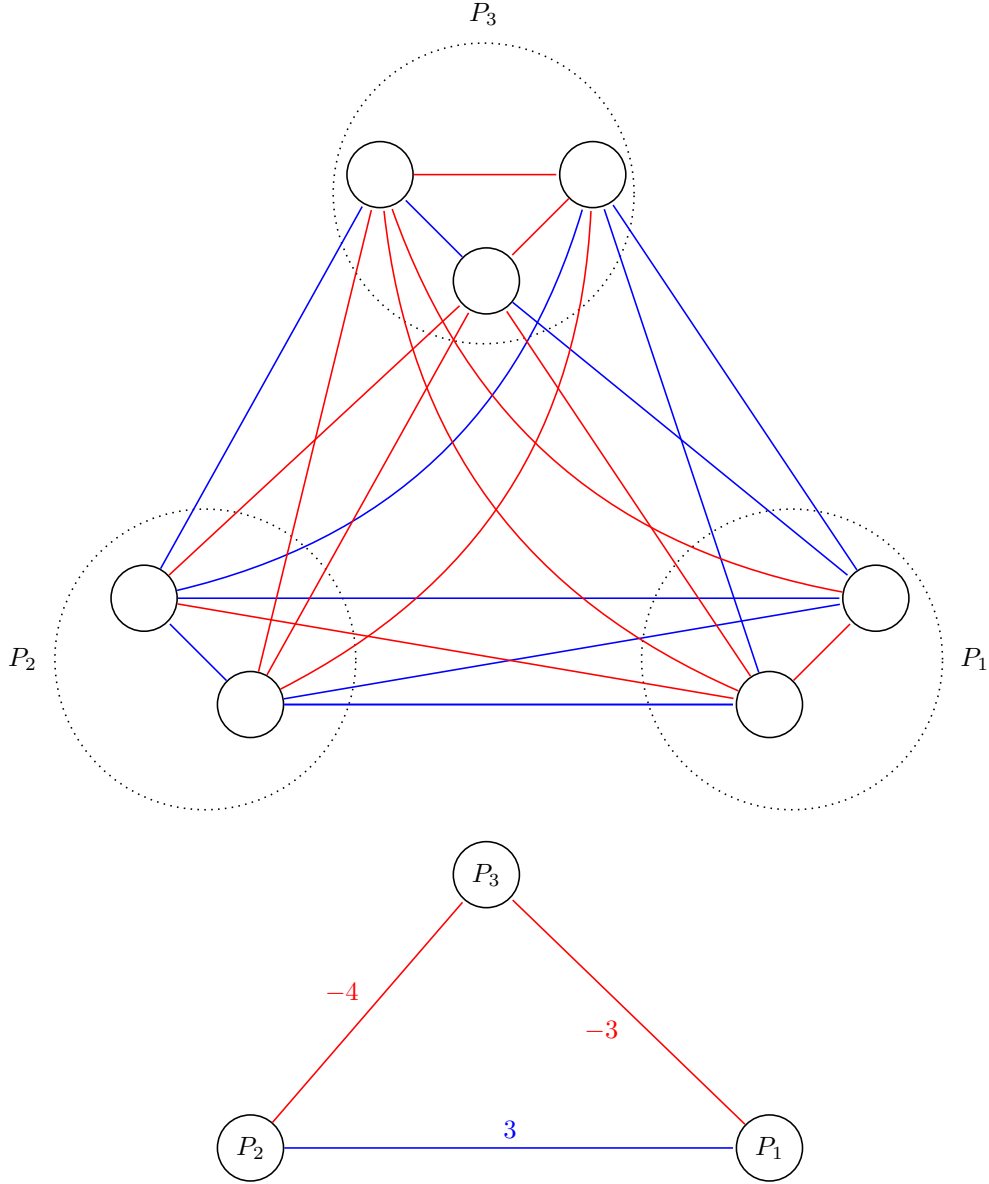
\begin{figure*}[htb]
\centering
\begin{tikzpicture}[-,>=stealth',shorten >=1pt,auto,node distance=2cm, , scale=0.20, semithick]

\node[state,color=black] (a) {};
\node[state,color=black] (b) [above left of = a] {};
\node[state,color=black] (c) [ right=6cm of a] {};
\node[state,color=black] (d) [above right of = c] {};

\node[state,color=black] (e) [above right=5cm and 2.5cm of a] {};
\node[state,color=black] (f) [above left of = e] {};
\node[state,color=black] (g) [above right of = e] {};

\node[state, dotted, minimum size=4cm] at (-3, 3) (p2) {};
\node[state, dotted, minimum size=4cm] at (36, 3) (p1) {};
\node[state, dotted, minimum size=4cm] at (15.5, 34) (p3) {};
\node[left=0.1cm of p2] {$P_2$};
\node[right=0.1cm of p1] {$P_1$};
\node[above=0.1cm of p3] {$P_3$};

\path[]
  (a) edge[blue] node {} (b)
  (b) edge[blue] node {} (d)
  (a) edge[blue] node {} (c)
  (a) edge[blue] node {} (d)
  (b) edge[blue] node {} (f)
  (d) edge[blue] node {} (g)
  (c) edge[blue] node {} (g)
  (a) edge[blue] node {} (c)
  (e) edge[blue] node {} (d)
  (e) edge[blue] node {} (f)
  (b) edge[blue, bend right, below] node {} (g)

  (c) edge[red] node {} (d) 
  (b) edge[red] node {} (c) 
  (a) edge[red] node {} (e) 
  (a) edge[red] node {} (f) 
  (a) edge[red, bend right, below] node {} (g) 
  (f) edge[red] node {} (g) 
  (b) edge[red] node {} (e) 
  (c) edge[red] node {} (e) 
  (g) edge[red] node {} (e)
  (c) edge[red, bend left, below] node {} (f)
  (d) edge[red, bend left, below] node {} (f)
  ;

\node[state, color=black] (pp2) [below=5cm of a] {$P_2$};
\node[state, color=black] (pp1) [below=5cm of c] {$P_1$};
\node[state, color=black] (pp3) [below=7cm of e] {$P_3$};
\path[]
  (pp2) edge[red] node {$-4$} (pp3)
  (pp2) edge[blue] node {$3$} (pp1)
  (pp1) edge[red] node {$-3$} (pp3)
;
\end{tikzpicture}
\caption{For partition $\CP = (P_1, P_2, P_3)$, graphs $G$ and $G^{\CP}$ are demonstrated; negative edges are red and positive edges are blue.  In this example,
$\inn(P_1) = \inn(P_3) = 1, \inn(P_2) = 0$ and $\out(P_i, P_j)$ are shown as a weight of edge $(P_i, P_j)$ in $G^\CP$. 
\label{fig:shift}}
\end{figure*}

\fi 

\end{document}